\documentclass[12pt,lettersize,journal,onecolumn]{IEEEtran}
\usepackage{amsmath,amsfonts}
\usepackage{amsmath,amsfonts}
\usepackage{algorithmic}
\usepackage{algorithm}
\usepackage{array}
\usepackage{textcomp}
\usepackage{stfloats}
\usepackage{url}
\usepackage{verbatim}
\usepackage{graphicx}
\usepackage{subfigure}
\usepackage{cite}
\usepackage{mathrsfs}
\usepackage{xcolor}
\usepackage{amssymb}
\usepackage{amsthm}
\usepackage{multirow}
\usepackage{threeparttable}
\usepackage{booktabs}
\usepackage{epstopdf}
\usepackage{stmaryrd}
\usepackage{bm}

\newtheorem{theorem}{Theorem}
\newtheoremstyle{noparens}%
  {}{}%
  {\itshape}{}%
  {\bfseries}{.}%
  { }%
  {\thmname{#1}\thmnumber{ #2}\mdseries\thmnote{ #3}}
\theoremstyle{noparens}
\newtheorem{theoremNoParens}[theorem]{Theorem}
\newtheorem{lemma}{Lemma}

\newtheorem{definition}{Definition}
\newtheorem*{remark}{Remark}

\hyphenation{op-tical net-works semi-conduc-tor IEEE-Xplore}
\usepackage{setspace}
\doublespacing

\begin{document}

\title{On the Performance of Low-complexity Decoders of LDPC Codes}
\author{Qingqing Peng, Dawei Yin, Dongxu Chang, Yuan Li,\\ Huazi Zhang, Guiying Yan, Guanghui Wang

\thanks{This work is partially supported by the National Key R\&D Program of China, (2023YFA1009602) (Corresponding author: Dawei Yin)

Qingqing Peng, Dawei Yin, and Dongxu Chang are with the School of Mathematics, Shandong University, Jinan, 250100 China (e-mail:\{pqing, dongxuchang\} @mail.sdu.edu.cn, daweiyin2@outlook.com).

Yuan Li and Huazi Zhang are with Huawei Technologies Co. Ltd., Hangzhou, 310051 China (e-mail: \{liyuan299, zhanghuazi\}@huawei.com).

Guiying Yan is with the Academy of Mathematics and Systems Science, CAS, University of Chinese Academy of Sciences, Beijing, 100190 China (e-mail: yangy@amss.ac.cn).

Guanghui Wang is with the School of Mathematics, Shandong University, Jinan, 250100 China, and also with the State Key Laboratory of Cryptography and Digital Economy Security, Shandong University, Jinan, 250000 China (e-mail: ghwang@sdu.edu.cn).
}
}

\markboth{Journal of \LaTeX\ Class Files,~Vol.~14, No.~8, xxxx~xxxx}%
{Shell \MakeLowercase{\textit{et al.}}: A Sample Article Using IEEEtran.cls for IEEE Journals}

\IEEEpubid{0000--0000/00\$00.00~\copyright~2021 IEEE}


\maketitle
\begin{abstract}
Efficient decoding is crucial to high-throughput and power-sensitive wireless communication scenarios. A theoretical analysis of the performance-complexity tradeoff toward low-complexity decoding is required for a better understanding of the fundamental limits in the above-mentioned scenarios. This study aims to explore the performance of LDPC codes under belief propagation (BP) decoding with complexity constraints. In other words, for a small number of iterations, we present a closed-form lower bound on the bit error rate (BER) of LDPC codes as a function of complexity.
Specifically, for the regular LDPC code ensembles, the dominant term in the order of the lower bound we provide matches that of the upper bound given by Lentmaier.
Furthermore, for irregular LDPC code ensembles, in addition to adopting the approach used for regular codes, we also propose a method to iteratively obtain the lower bound on the average BER.
\end{abstract}

\begin{IEEEkeywords}
 low-density parity-check codes, message passing, and belief propagation decoding.
\end{IEEEkeywords}

\section{introduction}

Through decades of efforts in pursuit of the Shannon limit, channel coding has seen major breakthroughs in both theory and practice. The low-density parity-check (LDPC) code, employed in 5G New Radio (NR) as a channel coding scheme, has theoretically approached the Shannon limit\cite{mackay1997near}. As a result, communication systems now have high spectral efficiency. However, the development of $6$G introduces new requirements for modern coding schemes, including low latency and high throughput. For instance, in immersive communication scenarios, next-generation channel coding is expected to support peak data rates of 200 Gbps to 1 Tbps, with an energy efficiency target of approximately 1 pJ/bit \cite{itu2023framework,10812743,9200376}. These requirements indicate that LDPC codes are expected to complete decoding with low complexity, highlighting the importance of evaluating their performance under low-iteration decoding constraints.

The performance of LDPC codes under belief propagation (BP) decoding has been a central topic of research. A notable milestone in this area is the introduction of the density evolution technique by Urbanke and Richardson \cite{richardson2002capacity}. They introduced density evolution as a powerful analytical tool for computing the average bit error rate (BER) of LDPC code ensembles. In this asymptotic regime, the average BER can be accurately characterized via iterative message passing on a tree-like computation graph. Furthermore, they demonstrated that the performance of a randomly chosen LDPC code closely approximates the ensemble average with high probability. Subsequent methods, such as Gaussian approximation and extrinsic information transfer (EXIT) chart \cite{910580,957394}, have been developed to simplify the analysis of density evolution. These approaches provide an iterative method for calculating performance, but cannot yield a closed-form expression.

However, directly deriving a closed-form expression for the performance of LDPC codes as a function of the decoding complexity of BP decoding is a challenging task. Some works have attempted to derive closed-form upper and lower bounds on the performance of LDPC codes to facilitate a better understanding of the behavior of LDPC codes. Lentmaier et al. established a closed-form upper bound on the BER for regular LDPC codes under $O(\log N)$ iterations, where $N$ is the block length \cite{lentmaier2005analysis}. Sahai et al. explored the relationship between decoding performance and the size of the largest neighborhood \cite{sahai2010general}.

In this study, we aim to establish a direct relationship between the decoding complexity and a lower bound of the average BER $\overline{P}_e$ of the LDPC code ensemble. Unlike Sahai’s work, we use a more direct metric—the number of iterations $l$—to measure the decoding complexity, rather than the size of the largest neighborhood, which only indirectly reflects complexity and requires additional estimation and scaling across different code ensembles. Moreover, in order to meet high throughput requirements, we focus on the performance at low iteration numbers. Specifically, our objective is to identify a function $f(l)$ that serves as a lower bound on the average BER, i.e., $\overline{P}_e \ge f(l)$. Unlike the iterative equations derived from density evolution, the function \( f(l) \) directly provides an explicit characterization of the performance lower bound of LDPC codes.

A better understanding of the lower bound on the performance of LDPC codes can offer valuable insights in several ways. First, it reveals the tradeoff between performance and decoding complexity. For instance, given a target error probability, one can estimate the minimum required decoding resources, such as decoding complexity. Conversely, when decoding complexity is constrained, lower bounds allow us to determine the best achievable performance under such limitations, which is critical for assessing the practical viability of a given code. Moreover, lower bounds can provide theoretical guidance for the structural design and optimization of LDPC codes.
\subsection{Our Contributions}
In this study, we derive a unified closed-form lower bound on the average BER of LDPC code ensembles, which is universally valid for both regular and irregular ensembles and for any iteration number $l$ under flooding BP decoding. To this end, we introduce an intermediate quantity: the expected minimum Hamming weight of the corresponding root-constrained tree code ensemble, which depends on the structure of the computation graph corresponding to the decoding process. The proposed approach consists of two main steps. First, we establish a direct relationship between the lower bound on the ensemble-average BER and the expected minimum Hamming weight of the root-constrained tree code ensemble. Then, we compute the functional relationship between the expected minimum Hamming weight and the number of decoding iterations $l$ separately for both regular and irregular LDPC codes. This framework enables a rigorous characterization of how the average BER evolves with the number of iterations, thereby fulfilling the main objective of this work.

Furthermore, for regular LDPC code ensembles with variable node degree $J$ and check node degree $K$, we show that when $l \le \log_{(J-1)(K-1)}\left(1-\frac{K}{J(K-1)}\right)N$, the BER lower bound can be expressed as $h2^{-a 2^{b l}}$ for the binary symmetric channel (BSC), the binary erasure channel (BEC), and the binary-input additive white Gaussian noise (BI-AWGN) channel, where $h, a > 0$ and $b \ge 0$ are constants determined by the specific channel and coding scheme. The term $b$ serves as the dominant factor determining the order of the average BER lower bound. We further show that under the condition $l \le \theta_1 \log_{(J-1)^5 (K-1)^3} N^2$ with $\theta_1 < 1$, the dominant term \( b \) matches the upper bound on the BER of LDPC codes provided by Lentmaier et al. \cite{lentmaier2005analysis}. This indicates that the derived closed-form lower bound is tight under these conditions and is also consistent with the double-exponential decay of BER with respect to the number of iterations, as conjectured by Grover et al. \cite{grover2011towards}.

For irregular LDPC code ensembles, we utilize the maximum degrees of variable and check nodes to adapt the lower bound results obtained for regular codes, thereby deriving a corresponding lower bound for the irregular case. To achieve a tighter bound, we propose an alternative iterative computational method that differs from traditional density evolution techniques. This bound is established by computing the expected minimum Hamming weight of the corresponding root-constrained tree code ensemble. Since this analysis is more complex for irregular codes, we reformulate the problem from a graph-theoretic perspective: namely, determining the number of distinct neighbors at a given distance from a specific vertex.

Taken together, the main contributions of this work are: (\romannumeral1) proposing a unified analytical framework applicable to both regular and irregular LDPC code ensembles, establishing a direct relationship between the number of decoding iterations and a lower bound on the average BER of LDPC code ensemble, based on the structure of the computation graph; (\romannumeral2) demonstrating the tightness of the proposed closed-form lower bound for regular LDPC code ensembles at low iteration numbers, whose main term matches that of the upper bound in  \cite{lentmaier2005analysis}; and
(\romannumeral3) beyond the unified analytical framework, for irregular LDPC code ensembles, providing an iterative computational approach based on a graph-theoretic reformulation to obtain a tighter lower bound, which serves as a complementary result to the theoretical analysis.

\subsection{Paper Organization}
The remainder of the paper is organized as follows. Section \ref{sec1} introduces LDPC codes and BP decoding. Section \ref{sec_both} derives a lower bound on the average BER of regular LDPC ensembles based on the expected minimum Hamming weight of the corresponding root-constrained tree code ensemble. Section \ref{sec2} establishes a direct relationship between the number of iterations
$l$ and the lower bound on the average BER for regular LDPC ensembles. Section \ref{sec3} extends the analytical framework to irregular LDPC ensembles and introduces an alternative method for computing the lower bound on the average BER. Sections \ref{simulation} and \ref{sec4} present the simulation results and conclude the paper with a summary of the main findings, respectively.

\section{PRELIMINARIES}
\label{sec1}
This Section will review the definitions of LDPC codes and BP decoding.
\subsection{Graphs and LDPC Codes}

An undirected graph $G = (U,E)$ is defined as a set of vertices $U$ and a set of edges $E\subseteq U \times U$. The degree of a vertex $u \in U$ refers to the number of edges connected to $u$. A walk of length $k$ in $G$ is a sequence of vertices $u_1, u_2, \ldots, u_{k+1}$ in $U$ such that $(u_i,u_{i+1}) \in E$ for all $i=1,\ldots,k$. If all vertices in this sequence are distinct, the walk is called a path. The distance between two vertices $u$ and $u'$ in $G$, denoted by $dist({u, u'})$, is defined as the length of the shortest path connecting them. If there does not exist a path connecting $u$ and $u'$, then $dist({u, u'}) = \infty$.

A graph $G = (V \cup C, E)$ is called bipartite if $V \cap C = \emptyset$ and every edge in $E$ connects a vertex in $V$ to a vertex in $C$.  
The Tanner graph of an LDPC code is a bipartite graph in this sense, where $V=\{v_1, \ldots, v_N\}$ and $C= \{c_1, \ldots, c_M\}$ are referred to as variable nodes and check nodes, respectively, and $N$ and $M$ represents the numbers of variable and check nodes. The Tanner graph provides a direct graphical representation of the sparse parity-check matrix $H \in \mathbb{F}_2^{M \times N}$ associated with the LDPC code. Specifically, an edge $(v_i, c_j) \in E$ exists if and only if $H(j,i) = 1$. An illustrative example of the parity-check matrix and its corresponding Tanner graph for an LDPC code of length 4 is shown in Fig.~\ref{fig1}.
 \begin{figure}[htbp]
\centering
\subfigure[parity-check matrix $H$]{
\raisebox{0.65\height}{
\includegraphics[width=1.3in]{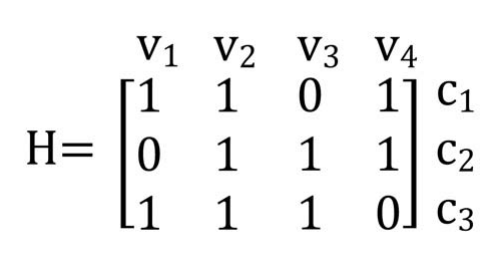}
}
\hfill
}
\subfigure[Tanner graph $G_1$]{
\includegraphics[width=1.3in]{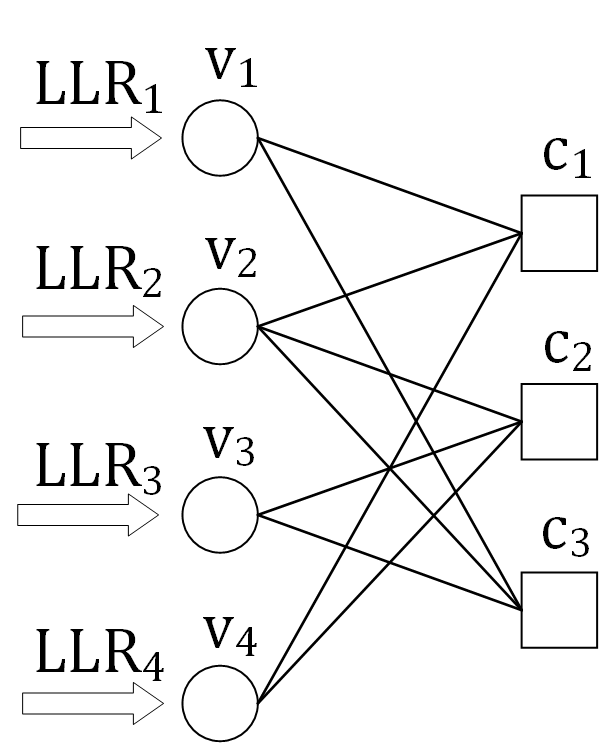}
}
\centering
\caption{Parity-check matrix and Tanner graph for an LDPC code with a code length of $4$. In the Tanner graph, circles represent variable nodes, and squares represent check nodes.  The $i$-th variable node receives channel message LLR$_i$}.
\label{fig1}
\end{figure}

\subsection{The Ensemble of LDPC Codes}
\label{submodel}
The normalized degree distributions from the node perspective of an LDPC code are usually described by two polynomials:
\begin{align}
L(x) = \sum_{d=1}^{\infty} L_d x^d,\;\;\;\;\;\;\;\;\;\;\;\;\; R(x) = \sum_{d=1}^{\infty} R_d x^d,
\end{align}
where $L_d$ denotes the proportion of variable nodes with a degree of $d$, and $R_d$ denotes the proportion of check nodes with a degree of $d$. Correspondingly, the degree distributions from the edge perspective are given by:
\begin{equation}
    \lambda(x) = \sum_{d=1}^{\infty} \lambda_d x^{d-1} =\frac{\sum_{d=1}^\infty dL_dx^{d-1}}{\sum_{d=1}^\infty dL_d},
\end{equation}
\begin{equation}
    \rho(x) = \sum_{d=1}^{\infty} \rho_d x^{d-1}=\frac{\sum_{d=1}^\infty dR_dx^{d-1}}{\sum_{d=1}^\infty dR_d} ,
\end{equation}
where $\lambda_d$ is the fraction of edges that connect to variable nodes of degree $d$, and $\rho_d$ is the fraction of edges that connect to check nodes of degree $d$.


We next describe the construction of \text{LDPC}$(N, L, R)$ \cite{dehghan2018tanner}. In this ensemble, the degrees of variable nodes are sampled according to the degree distribution $L(x)$, while the degrees of check nodes are sampled according to $R(x)$. For each node, a bin is created containing a number of cells equal to the node's degree. Random perfect matchings are then created between the cells on the variable node side and those on the check node side. Each matching yields a so-called configuration in which the matched cells on both sides of the graph are connected by an edge. Throughout this paper, we assume that configurations are selected uniformly at random. Corresponding to each matching (configuration), we construct a bipartite graph such that if there is an edge between two cells, then we place an
edge between the corresponding nodes (bins) in the bipartite graph. The set of all such bipartite graphs forms the ensemble $\mathcal{G}$. Note that $\mathcal{G}$ includes bipartite graphs with parallel edges. The \text{LDPC}$(N, L, R)$ ensemble is then defined by removing all bipartite graphs in $\mathcal{G}$ that contain parallel edges. Given that the bipartite graphs constructed from random configurations contain no parallel edges, the distribution of bipartite graphs in \text{LDPC}$(N, L, R)$ is uniform.

\subsection{BP Decoding and Computation Graph}
\begin{figure*}[!t]
\centering
\includegraphics[width=6in]{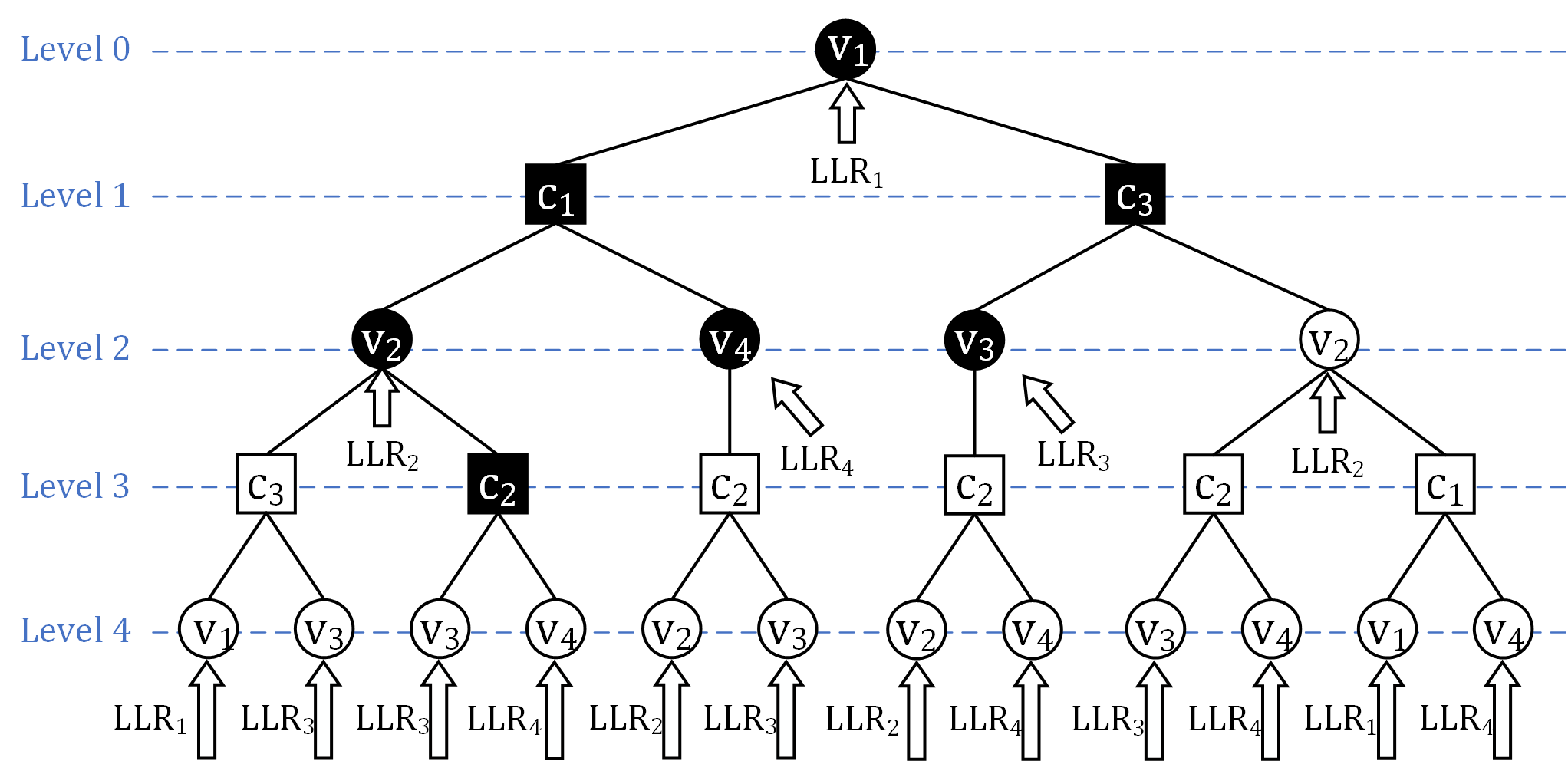}
\caption{A computation graph of height $4$ for $v_1$ in Fig. \ref{fig1}. Black nodes represent the first occurrence of a node, while white nodes indicate subsequent occurrences.}
\label{fig2}
\end{figure*}

One effective decoder for LDPC codes is BP decoding. In a Tanner graph, each variable node receives a message from the channel, and BP decoding is achieved by passing messages along the edges. Assume that $\boldsymbol{s}= \{s_1,\ldots, s_N\} \in \{0,1\}^N$ are transmitted through binary-input memoryless channels, and  $\boldsymbol{y} = \{y_1,\ldots, y_N\}$ are received signals. Let
\begin{equation}
\text{LLR}_i = \ln \frac{p(y_i|s_i=0)}{p(y_i|s_i=1)}
\end{equation}
denote the initial log-likelihood ratio (LLR) for the $i$-th variable node, and LLR$_i$ can also be regarded as the channel message of the $i$-th variable node, as illustrated in Fig. \ref{fig1}.

The message-passing process of BP decoding consists of variable-to-check (V2C) message updates and check-to-variable (C2V) message updates. The message update rule of V2C is
\begin{equation}
\label{v2c}
B_{v_i \to c_j}^{(l)} = \text{LLR}_i + \sum_{\alpha \in \mathcal{N}(v_i) \backslash c_j}B_{\alpha \to v_i}^{(l-1)},
\end{equation}
and the message update rule of C2V is
\begin{equation}
\label{c2v}
B_{c_j \to v_i}^{(l)} = 2 \tanh^{-1}\left(\prod_{\beta \in \mathcal{N}(c_j) \backslash v_i} \tanh \left(\frac{B_{\beta \to c_j}^{(l)}}{2}\right)\right),
\end{equation}
where $l$ is the number of iterations, $\mathcal{N}(v_i)$ denotes the nodes connected directly to node $v_i$, $v_i \to c_j$ means from variable node $v_i$ to check node $c_j$ and $c_j \to v_i$ means from check node $c_j$ to variable node $v_i$. The initial message $B_{v_i \to c_j}^{(0)}$ and $B_{c_j \to v_i}^{(0)}$ is $0$. After $l$ iterations, the output message of variable node $v_i$ is
\begin{equation}
\label{v_all}
B_{v_i}^{(l)} = \text{LLR}_{i} +\sum_{\alpha \in \mathcal{N}(v_i)}B_{\alpha\to v_i}^{(l)}
\end{equation}

Now, let us focus on the decoding process of a single variable node, which can be described using a computation graph \cite[Chapter 3.7.1]{richardson2008modern}. For example, consider the decoding of the node \(v_1\) in Fig. \ref{fig1}. The output message of \(v_1\) is obtained by combining the messages received from its neighboring check nodes \(c_1\) and \(c_3\), along with the channel message LLR$_1$, according to (\ref{v_all}). Furthermore, the outgoing massage of $c_1$ is a function of the messages arriving from \(v_2\) and \(v_4\), following (\ref{c2v}). Similarly, the dependencies among these messages can be represented using a graph, resulting in the tree-like computation graph rooted at \( v_1 \) after $2$ iterations, as illustrated in Fig. \ref{fig2}. In this example, $v_1$ receives four channel messages-$\text{LLR}_1, \text{LLR}_2, \text{LLR}_3,$ and $\text{LLR}_4$-which correspond to the number of distinct variable nodes in the computation graph. This implicitly suggests that the number of distinct variable nodes in the computation graph may affect decoding performance. A more detailed analysis will be presented later.

Let $v$ be a variable node in the Tanner graph $G$, and let $k$ be a non-negative even integer. We then introduce several definitions related to the computation graph.
\begin{definition}(COMPUTATION GRAPH) The computation graph $T_k(v,G)$ is a tree rooted at $v$ and is constructed as follows:
\begin{itemize}
    \item At level 0, $T_{0}(v,G)$ consists of only the variable node $v$.
    \item For each node $u$ at level $k^\prime$ $(0\le k^\prime< k)$, its children are copies of the neighbors of $u$ in $G$ excluding its parent node in $T_{k’}(v,G)$.
    \item The construction stops when the tree reaches height $k$.
\end{itemize}
\label{ref_computationgraph}
\end{definition}

\begin{definition}(TREE CODE AND MINIMUM HAMMING WEIGHT)
\label{def_1}
The tree code \(\mathcal{C}_k(v,G)\) is the set of all valid codewords on \(T_{k}(v,G)\) \cite[Chapter 3.7.2]{richardson2008modern}. Specifically, \(\mathcal{C}_k(v,G)\) consists of all 0/1 assignments to the variable nodes in \(T_{k}(v,G)\) that satisfy all the constraints in the computation graph. Moreover, the root-constrained tree code \(\mathcal{C}^{0/1}_k(v,G)\) denotes the subset of \(\mathcal{C}_k(v,G)\) where the root node \(v\) takes the value \(0\) or \(1\).

Let \( w_k^1(v,G) \) denote the minimum Hamming weight of the code \(\mathcal{C}_k^{1}(v,G)\).
That is,
\begin{equation}
w_k^1(v,G) = \min_{\boldsymbol{s} \in \mathcal{C}_k^{1}(v,G)} \mathrm{wt}(\boldsymbol{s}),
\end{equation}
where \(\mathrm{wt}(\boldsymbol{s})\) denotes the Hamming weight of a codeword $\boldsymbol{s}$.
\end{definition}

Consider the ensemble LDPC\((N, L, R)\). Let \(\mathfrak{C}_k(N, L, R)\) and \(\mathfrak{C}^{0,1}_k(N, L, R)\) denote the ensembles of tree codes and root-constrained tree codes, respectively, induced by LDPC\((N, L, R)\). They are defined as follows:

\begin{definition}(TREE CODE ENSEMBLE AND EXPECTED MINIMUM HAMMING WEIGHT)
To sample an element from the ensemble, a Tanner graph \(G\) is first drawn uniformly at random from \text{LDPC}\((N, L, R)\). Then, a variable node \(v\) is selected uniformly at random from \(G\), which induces the tree code $\mathcal{C}_k(v,G)$ and root-constrained tree code \(\mathcal{C}^{0/1}_k(v,G)\). The ensemble \(\mathfrak{C}_k(N, L, R)\) (respectively, $\mathfrak{C}_k^{0/1}(N,L,R)$) is defined as the collection of all such \(\mathcal{C}_{k}(v,G)\) (respectively, $\mathcal{C}_k^{0/1}(v,G)$) generated under the above sampling procedure. Furthermore, let \(\overline{w}_k^1(N,L,R)\) denote the expected minimum Hamming weight over the ensemble \(\mathfrak{C}_k^{1}(N,L,R)\), i.e.,
\begin{equation}
    \overline{w}_k^1(N,L,R) = \mathbb{E}_{(v,G):G \in LDPC(N,L,R),v \in G} [w_k^1(v,G)].
\end{equation}
\end{definition}

As an illustrative example, consider the computation graph $T_{4}(v_1,G_1)$ shown in Fig.~\ref{fig2}. In Fig.~\ref{fig_add1}, an example codeword defined on $T_{4}(v_1,G_1)$ is depicted. Solid and hollow circles are used to represent variable nodes assigned the values 1 and 0, respectively.

\begin{figure}[!t]
\centering
\includegraphics[width=3in]{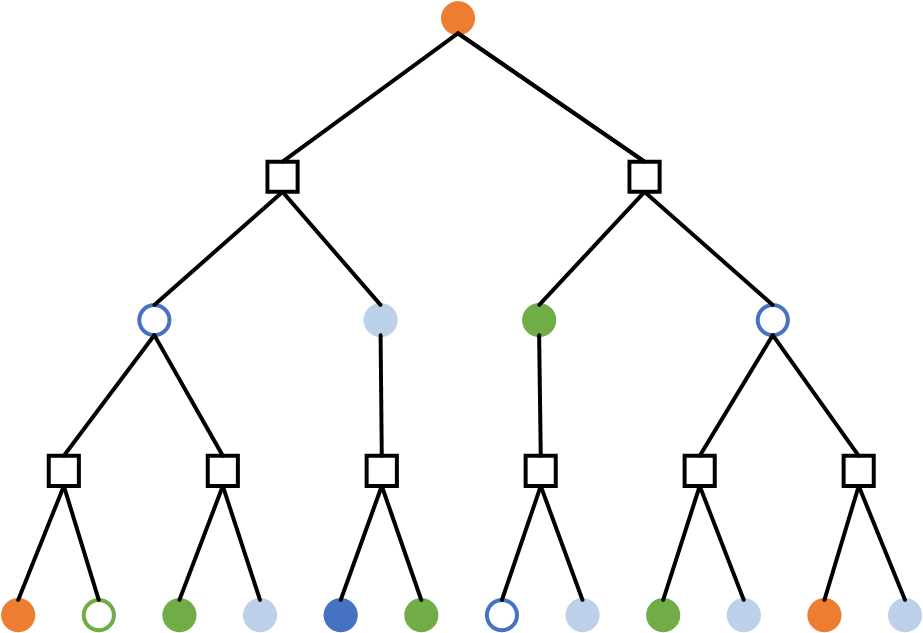} 
\caption{An example codeword from $\mathcal{C}_4(v_1,G_1)$, where solid and hollow circles represent variable nodes assigned the values 1 and 0, respectively. Note that $T_{4}(v_1,G_1)$ contains repeated variable nodes; circles with the same color indicate the same variable node, which takes the same value (either 0 or 1) across all occurrences.}
\label{fig_add1}
\end{figure}

\subsection{Big-O Notation}
We use standard Bachmann-Landau notation in this paper. For any non-negative real-valued functions $f(x)$ and $g(x)$, the notation $f(x) = O(g(x))$ (or equivalently $f(x) \le O(g(x))$) implies that for sufficiently large $x$, $f(x)$ is bounded above by $\theta g(x)$, where $\theta$ is a positive constant. 

\section{performance bounds of ldpc code ensembles given minimum Hamming weight}
\label{sec_both}
In this section, we derive a lower bound on the average BER of LDPC code ensembles as a function of the expected minimum Hamming weight of the corresponding root-constrained tree code ensemble, which is governed by the decoding complexity, i.e., the number of iterations. The main theoretical result addressing this objective is presented in Theorem \ref{theo_lowerbound}.

Theorem~\ref{theo_lowerbound} plays a crucial role in establishing the relationship between a lower bound of the average BER and the number of decoding iterations across different LDPC code ensembles. In particular, Sections~\ref{sec2} and~\ref{sec3} derive the relationship between the number of iterations and the expected minimum Hamming weight for regular and irregular ensembles, respectively. Together, these results provide a unified analytical framework that characterizes the dependence of the lower bound of the average BER on the number of iterations through the intermediate quantity---the expected minimum Hamming weight.


\begin{theorem}
\label{theo_lowerbound} 
Consider the ensemble \text{LDPC}$(N, L, R)$ and the decoder performs $l$ iterations of BP decoding. If the expected minimum Hamming weight $\overline{w}_{2l}^1(N,L,R)$ over the ensemble \(\mathfrak{C}_{2l}^{1}(N,L,R)\) is $w$, then the average BER $\overline{P}_e$ of the LDPC code ensemble satisfies
\begin{equation}
\label{lm11}
    \overline{P}_e^{BI-AWGN} \ge Q\left(\sqrt{\frac{w}{\sigma^2}}\right)
\end{equation}
for the BI-AWGN channel, 
\begin{equation}
    \overline{P}_e^{BSC} \ge q^{\frac{w+1}{2}}
\end{equation}
for the BSC, 
\begin{equation}
    \overline{P}_e^{BEC} \ge \epsilon^{w}
\end{equation}
for the BEC, where $Q(x)=\int_x^\infty\frac{1}{\sqrt{2\pi}}e^{-\frac{x^2}{2}}\mathrm{d}x$ is the complementary standard normal Cumulative Distribution Function, $\sigma^2$ denotes the variance of the noise in the Gaussian channel, $q$ denotes the crossover probability in the BSC, and $\epsilon$ denotes the erasure probability in the BEC.
\end{theorem}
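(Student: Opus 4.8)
The plan is to reduce the ensemble-average BER to a statement about a single variable node $v$ in a single graph $G$, and then to bound the bit-error probability at $v$ from below by the error probability of a genie-aided decoder that operates on the computation graph $T_{2l}(v,G)$. The key observation is that after $l$ iterations of flooding BP, the output LLR $B_v^{(l)}$ depends only on the channel observations of the variable nodes appearing in $T_{2l}(v,G)$ and on the parity-check constraints inside that graph; hence the optimal decoder for the bit at $v$, given all of this local information, can do no worse than BP, and its error probability is a lower bound on the BER contributed by $v$. By linearity of expectation over the random choice of $(v,G)$, it suffices to lower bound this genie-aided error probability for each fixed computation graph and then take expectations.

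The main steps, in order, are as follows. First, fix $(v,G)$ and consider the tree code $\mathcal{C}_{2l}(v,G)$ on the computation graph; the decoding problem for the root bit is exactly a binary hypothesis testing problem between the codeword families $\mathcal{C}_{2l}^{0}(v,G)$ and $\mathcal{C}_{2l}^{1}(v,G)$, observed through the memoryless channel, where by symmetry we may assume the all-zero codeword was sent. Second, since MAP decoding of a code is at least as hard as distinguishing the transmitted codeword from any single competing codeword, I would lower bound the error probability by the pairwise error probability between $\boldsymbol{0}$ and a minimum-weight codeword $\boldsymbol{s}^\ast \in \mathcal{C}_{2l}^{1}(v,G)$ with $\mathrm{wt}(\boldsymbol{s}^\ast) = w_{2l}^1(v,G)$; for a binary-input memoryless symmetric channel this pairwise probability depends only on the weight $w_{2l}^1(v,G)$. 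Third, I would specialize the pairwise error probability to each channel: for the BI-AWGN channel the two codewords differ in $w$ coordinates, giving squared Euclidean distance $4w$ (with unit-energy BPSK) and pairwise error probability $Q(\sqrt{w/\sigma^2})$; for the BSC, the ML decision between two codewords at Hamming distance $w$ errs whenever at least $\lceil (w+1)/2 \rceil$ of the differing positions flip, which is at least $q^{(w+1)/2}$ by keeping only the single most-likely error pattern (or a clean binomial tail bound); for the BEC, an error (ambiguity) is forced precisely when all $w$ differing positions are erased, an event of probability $\epsilon^{w}$. Fourth, I would take the expectation over $(v,G)$: each bound is a convex (for AWGN, $Q(\sqrt{\cdot})$ is convex in $w$ on the relevant range) or otherwise decreasing function of the weight, so Jensen's inequality applied to $\mathbb{E}[w_{2l}^1(v,G)] = w$ yields the stated bounds with the expected minimum Hamming weight $w$ in place of the random weight. (For the BSC and BEC the relevant functions $q^{(w+1)/2}$ and $\epsilon^{w}$ are convex in $w$, so Jensen again applies in the correct direction.)

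The main obstacle I anticipate is the last step — justifying the passage from the random weight $w_{2l}^1(v,G)$ to its expectation $w$. Convexity of $\epsilon^{w}$ and $q^{(w+1)/2}$ in the integer variable $w$ is straightforward, but one must be a little careful that the inequality runs the right way (we need a lower bound on an expectation of a convex function, which is exactly Jensen). For the AWGN case one must check that $w \mapsto Q(\sqrt{w/\sigma^2})$ is convex over the range of weights that actually occur; this holds once $w/\sigma^2$ is bounded away from $0$ in a suitable sense, and since $w \ge 1$ always (the root is set to $1$, so the minimum-weight codeword has weight at least one), a direct second-derivative check should close this. A secondary point to handle cleanly is the reduction from BP error probability to genie-aided MAP error probability on the finite computation graph: one should note that BP on the tree $T_{2l}(v,G)$ computes the exact MAP estimate of the root given the observations on that tree, and that providing strictly more information (the full received word and the full graph) to a hypothetical decoder can only decrease its error probability, so the true BER is at least the tree-MAP error probability, which is at least the pairwise bound above. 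Everything else is a routine channel-by-channel computation.
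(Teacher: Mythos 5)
Your proposal is correct and follows essentially the same route as the paper's proof: for each $(v,G)$ it reduces the root-bit decision to a binary test between the all-zero word and a minimum-weight codeword of $\mathcal{C}_{2l}^{1}(v,G)$, lower-bounds the BP error by the optimal pairwise (repetition-code) error for each channel, and passes to the expected weight $w$ by Jensen's inequality, just as the paper does (your one-step convexity of $w\mapsto Q\bigl(\sqrt{w}/\sigma\bigr)$ replaces the paper's two-step use of convexity of $Q$ and concavity of $\sqrt{\cdot}$). Two minor points to tighten: the remark that BP computes the exact root-MAP on $T_{2l}(v,G)$ is unnecessary and fails when the computation graph has repeated nodes --- you only need, as you state earlier, that BP's decision is a function of the local observations, so the optimal local decision can only do better; and for even $w$ the single most-likely error pattern on the BSC gives only $q^{\lceil (w+1)/2\rceil}\le q^{(w+1)/2}$, so the stated bound needs the tie event included (or the exact two-codeword error probability, as the paper obtains via the repetition code).
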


\begin{proof}
It suffices to prove the result for the BI-AWGN channel. We first establish a stronger version of the statement: for any Tanner graph \( G \in \)LDPC\((N, L, R)\) and any variable node \( v \) in \( G \), the error probability of $v$ after $l$ iterations of BP decoding is lower bounded as 
\begin{equation}
 P_v^{BI-AWGN} \ge Q\left(\sqrt{\frac{w_{2l}^1(v,G)}{\sigma^2}}\right),   
\end{equation}
where $w_{2l}^1(v,G)$ is the minimum Hamming weight of the code $\mathcal{C}_{2l}^1(v,G)$ as defined in Definition \ref{def_1}.

Consider the code \(\mathcal{C}_{2l}(v,G)\) for transmission. Let \(\boldsymbol{s}\) be a codeword of weight \(w_{2l}^1{(v,G)}\) such that $v$ takes the value 1, and assume that the transmitted codeword is either the all-zero codeword \(\boldsymbol{0}\) or \(\boldsymbol{s}\), each occurring with probability \(1/2\). Transmission takes place over a BI-AWGN channel, and the received vector is denoted by \(\boldsymbol{y}\). Let \(P_v^{\mathcal{C}_{2l}(v,G)}\) denote the probability that variable node \(v\) is in error after \(l\) iterations of BP decoding.

We claim that $p_v^{BI-AWGN} = P_v^{\mathcal{C}_{2l}(v,G)}$. This equality follows from the observation in \cite[Chapter 3]{richardson2008modern} that the error probability of $v$ depends on the structure of the computation graph rooted at $v$ but is independent of the specific transmitted codeword. Moreover, by the Definition \ref{ref_computationgraph}, the code ${\mathcal{C}_{2l}(v,G)}$ shares the same computation graph of height $2l$, rooted at $v$, as the code $G$. This suggests that the claim holds.

We now focus on the error probability $P_v^{\mathcal{C}_{2l}(v,G)}$. Let $Y_0$ and $Y_1$ denote the sets of received sequences $\boldsymbol{y}$ such that, when any element from $Y_0$ (respectively, $Y_1$) is input into the BP decoder, the decoder output belongs to $\mathcal{C}^0_{2l}(v,G)$ (respectively, $\mathcal{C}^1_{2l}(v,G)$). Denote by $Y_0'$ (respectively, $Y_1'$) the complement of $Y_0$ (respectively, $Y_1$), so that the received space $\mathbb{R}^{n}=Y_0 \cup Y_0'=Y_1\cup Y_1'$. Then the error probability of node \( v \) is 
\begin{equation}
\begin{split}
\label{p_v}
&P_v^{BI-AWGN}=P_v^{\mathcal{C}_{2l}(v,G)} \\
&= p(\boldsymbol{0}) \int_{\boldsymbol{y} \in Y_0'} p(\boldsymbol{y}|\boldsymbol{0})\mathrm{d}\boldsymbol{y}+ p(\boldsymbol{s}) \int_{\boldsymbol{y} \in Y_1'} p(\boldsymbol{y}|\boldsymbol{s})\mathrm{d}\boldsymbol{y} \\
&=\frac{1}{2}\left(\int_{\boldsymbol{y} \in Y_0'} p(\boldsymbol{y}|\boldsymbol{0})\mathrm{d}\boldsymbol{y} + \int_{\boldsymbol{y} \in Y_1'} p(\boldsymbol{y}|\boldsymbol{s})\mathrm{d}\boldsymbol{y}\right) \\
&\overset{(e1)}{=}\frac{1}{2}\Bigg(\int_{\boldsymbol{y} \in Y_1} p(\boldsymbol{y}|\boldsymbol{0})\mathrm{d}\boldsymbol{y} + \int_{\boldsymbol{y} \in Y_0'\cap Y_1'} p(\boldsymbol{y}|\boldsymbol{0})\mathrm{d}\boldsymbol{y} \\
&\quad\quad+\int_{\boldsymbol{y} \in Y_0} p(\boldsymbol{y}|\boldsymbol{s})\mathrm{d}\boldsymbol{y}+\int_{\boldsymbol{y} \in Y_0'\cap Y_1'} p(\boldsymbol{y}|\boldsymbol{s})\mathrm{d}\boldsymbol{y}\Bigg) \\
&\overset{(e2)}{\ge} \frac{1}{2}\left(\int_{\boldsymbol{y} \in Y} \min \{p(\boldsymbol{y}|\boldsymbol{0}),p(\boldsymbol{y}|\boldsymbol{s})\}\mathrm{d}\boldsymbol{y}\right),
\end{split}
\end{equation}
where $p(\boldsymbol{s})$ denotes the probability of codeword $\boldsymbol{s}$ being sent, and $p(\boldsymbol{y}|\boldsymbol{s})$ denotes the transition probability of a binary-input output-symmetric memoryless channel. The validity of $(e1)$ follows directly from the definitions of \( Y_0 \) and $Y_1$, with \( Y_0 \cap Y_1 = \emptyset \) and \( Y_1 \subseteq Y_0' \). The validity of $(e2)$ relies on the partitioning of the received space \( Y \), that is, \( Y = Y_0 \cup Y_1 \cup (Y_0'\cap Y_1') \).

Interestingly, the term 

\begin{equation}
\label{term}
  P_{REP}^{BI-AWGN} \triangleq \frac{1}{2}\left(\int_{\boldsymbol{y} \in Y} \min \{p(\boldsymbol{y}|\boldsymbol{0}),p(\boldsymbol{y}|\boldsymbol{s})\}\mathrm{d}\boldsymbol{y}\right)
\end{equation}
can be considered as the BLER of a code composed of two codewords, $\boldsymbol{0}$ and $\boldsymbol{s}$, with each codeword being sent with a probability of $1/2$, and MAP decoding is used. The distance between $\boldsymbol{0}$ and $\boldsymbol{s}$ is $w_{2l}^1(v,G)$, thus (\ref{term}) corresponds to the BLER of a repetition code with a minimum distance of $w_{2l}^1(v,G)$, whose value is given in \cite{cui2020does} as follows:
\begin{equation}
     P_{REP}^{BI-AWGN} = Q\left(\sqrt{\frac{w}{\sigma^2}}\right)
\end{equation}
This completes the proof of the stronger version of the statement.

Theorem~\ref{theo_lowerbound} can be straightforwardly derived from the following equation: 

\begin{equation}
    \begin{split}
&\mathbb{E}_{G\in \text{LDPC}(N,L,R)}\left[\mathbb{E}_{v\in G}\left[P_v^{BI-AWGN}\right]\right] \\
&\ge \mathbb{E}_{G\in \text{LDPC}(N,L,R)}\left[\mathbb{E}_{v\in G}\left[Q\left(\sqrt{\frac{w_{2l}^1(v,G)}{\sigma^2}}\right)\right]\right] \\
& \overset{(e3)}{\ge} Q\left(\frac{\mathbb{E}_{G\in \text{LDPC}(N,L,R)}\left[\mathbb{E}_{v\in G}\left[\sqrt{w_{2l}^1(v,G)}\right]\right]}{\sigma}\right) \\
& \overset{(e4)}{\ge}  Q\left(\sqrt{\frac{w}{\sigma^2}}\right) 
\end{split}
\end{equation}
The validity of $(e3)$ follows from the fact that the second derivative of $Q(x)$ is positive for $x >0$, together with Jensen’s inequality. The justification for $(e4)$ stems from the fact that \( Q(x) \) is strictly decreasing for \( x > 0 \), and the expected value of $w_{2l}^1(v,G)$ is $w$. 
\end{proof}
\begin{remark}
Here, we also provide an intuitive explanation of Theorem~\ref{theo_lowerbound}’s proof. We construct a new code consisting only of $\boldsymbol{0}$ and $\boldsymbol{s}$. By analyzing the error probability of this newly constructed code, we can derive a lower bound of the error probability of node $v$ in the original code $G$.
\end{remark}
\begin{remark}
Based on the stronger version of Theorem~\ref{theo_lowerbound}, it follows that the proposed result can be applied to both standardized and specific LDPC codes once the corresponding expected minimum Hamming weight (or a computable upper bound thereof) is known. In particular, for well-structured code families such as the LDPC codes specified in the Consultative Committee for Space Data Systems (CCSDS) standard \cite{CCSDS131B3} and the IEEE 802.11n standard \cite{11090080}, the expected minimum Hamming weight can be efficiently estimated either through simulation or directly from the base matrix structure. These estimates can then be substituted into Theorem~\ref{theo_lowerbound} to obtain a practical lower bound on the BER for the given code family. This demonstrates that the analytical framework developed in this work is not only theoretically rigorous but also applicable to real-world standardized LDPC codes.
\end{remark}

The value of \( w_{2l}^1(v,G) \) depends on the structure of \( T_{2l}(v,G) \), which is determined by factors such as the number of decoding iterations \( l \) and the code rate of $\mathcal{C}_{2l}(v,G)$. As can be seen from the proof, Theorem~\ref{theo_lowerbound} holds even when \( l \) is sufficiently large such that \( \mathcal{C}_{2l}(v,G) = G \), or when the code rate exceeds the channel capacity. In these cases, however, the lower bound provided by Theorem~\ref{theo_lowerbound} becomes trivial.

\section{performance bounds of regular ldpc code ensembles}
\label{sec2}

This section first presents Theorem~\ref{theo_regular_lower}, which provides a closed-form lower bound on the average BER of regular LDPC code ensembles as a function of the number of decoding iterations. To provide theoretical justification for this result, we subsequently establish the relationship between the number of decoding iterations and the expected minimum Hamming weight of the corresponding root-constrained tree code ensemble. Together with Theorem~\ref{theo_lowerbound}, this relationship constitutes the analytical foundation for the proof of Theorem~\ref{theo_regular_lower}. The tightness of the derived bound is discussed in Subsection \ref{section_re_B}.

\subsection{Main Result}
\label{regular_ensemble}
In the following, we present the main result for regular LDPC code ensembles.
\begin{theorem}
\label{theo_regular_lower}
 Consider the ensemble \text{LDPC}$(N, x^J, x^K)$ with fixed degrees $J \ge 3$ and $K$, where the decoder performs $l$ iterations of BP decoding. For $N$ is sufficiently large, the average BER $\overline{P}_e$ of the LDPC code ensemble satisfies
\begin{equation}
\begin{split}
  \overline{P}_e^{BI-AWGN} \ge Q\left(\sqrt{\frac{\overline{w}^{UB}_{2l}(N,x^J,x^K)}{\sigma^2}}\right)\ge \frac{e^{1/(\pi+ 2)}}{4} \sqrt{\frac{\pi+2}{\pi}} 2^{-\frac{1}{ \sigma^2\ln2} \overline{w}^{UB}_{2l}(N,x^J,x^K)}
\end{split}  
\end{equation}
for the BI-AWGN channel with noise variable $\sigma^2$,  
\begin{equation}
\begin{split}
    \overline{P}_e^{BSC} \ge q^{\frac{\overline{w}^{UB}_{2l}(N,x^J,x^K)+1}{2}}
\end{split}
\end{equation}
for the BSC with crossover probability $q$, 
\begin{equation}
\label{bec_lower}
\begin{split}
    \overline{P}_e^{BEC} \ge \epsilon^{\overline{w}^{UB}_{2l}(N,x^J,x^K)}
\end{split}
\end{equation}
for the BEC with erasure probability $\epsilon$, where $\overline{w}^{UB}_{2l}(N,x^J,x^K)$ is given by
\begin{equation}
\overline{w}^{UB}_{2l}(N,x^J,x^K) \triangleq\begin{cases}
\frac{J(J-1)^l - 2}{J - 2},&\text{if } l \le \theta_1\log_{(J-1)^5(K-1)^3}N^2\\
N,&\text{if } l> \log_{(J-1)(K-1)}\left(\left(1-\frac{K}{J(K-1)}\right)N\right) \\
\frac{J(K-1)^{l+1}(J-1)^l-K}{(J-1)(K-1)-1},&\text{ otherwise}, \\
\end{cases}
\end{equation}
$\theta_1$ is any constant strictly less than 1.
\end{theorem}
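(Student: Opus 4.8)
\emph{Proof proposal.} The plan is to route everything through Theorem~\ref{theo_lowerbound} and then bound the expected minimum Hamming weight of the root-constrained tree code ensemble regime by regime. First I would invoke Theorem~\ref{theo_lowerbound}, which already expresses $\overline{P}_e$ for each of the three channels as a strictly decreasing function of $\overline{w}_{2l}^1(N,x^J,x^K)$: $Q(\sqrt{w/\sigma^2})$, $q^{(w+1)/2}$ and $\epsilon^{w}$ are all decreasing in $w$. Hence the leftmost inequality in each displayed chain follows from the single estimate $\overline{w}_{2l}^1(N,x^J,x^K)\le\overline{w}^{UB}_{2l}(N,x^J,x^K)$. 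The rightmost inequality in the BI-AWGN chain is separate and purely analytic: writing $x=\sqrt{w/\sigma^2}$ and noting $2^{-w/(\sigma^2\ln 2)}=e^{-w/\sigma^2}$, it asks for $Q(x)\ge C e^{-x^2}$ on $x>0$ with $C=\tfrac{e^{1/(\pi+2)}}{4}\sqrt{\tfrac{\pi+2}{\pi}}$, which follows from a standard closed-form lower bound on the Gaussian $Q$-function, $C$ being (a lower bound for) $\min_{x>0}Q(x)e^{x^2}$.

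For the two large-$l$ regimes the bound on $\overline{w}_{2l}^1$ is deterministic. Fix any $(v,G)$ with $G\in$ LDPC$(N,x^J,x^K)$. Because all variable and check degrees equal $J$ and $K$ respectively, $T_{2l}(v,G)$ is the truncated $(J,K)$-regular tree; since the root is not frozen by the checks carried in $T_{2l}(v,G)$, the code $\mathcal{C}^1_{2l}(v,G)$ is nonempty, and any codeword in it has Hamming weight at most the number of variable positions in $T_{2l}(v,G)$, namely $\tfrac{J(K-1)^{l+1}(J-1)^l-K}{(J-1)(K-1)-1}$, and likewise at most $N$. A one-line computation shows $\tfrac{J(K-1)^{l+1}(J-1)^l-K}{(J-1)(K-1)-1}\le N$ exactly when $l\le\log_{(J-1)(K-1)}\!\big((1-\tfrac{K}{J(K-1)})N\big)$, so the three-way split defining $\overline{w}^{UB}_{2l}$ is arranged precisely so that the chosen value dominates this deterministic weight bound throughout the second and third regimes; taking expectations over $(v,G)$ settles those cases.

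The substantive case is small $l$, where one must improve the crude ``all of $T_{2l}$'' bound to $\tfrac{J(J-1)^l-2}{J-2}$, the number of variable positions in the depth-$l$, $(J-1)$-branching subtree obtained by satisfying each check with \emph{exactly} two incident ones. On a locally tree-like neighborhood this subtree folds injectively into $G$, its indicator is a valid root-constrained tree codeword, and its weight is exactly $\tfrac{J(J-1)^l-2}{J-2}$; the only obstruction is short cycles in the random Tanner graph, which can pin a variable both on and off the chosen support or over-satisfy a check. I would therefore work in the configuration model underlying LDPC$(N,x^J,x^K)$ and control, in expectation over the random $(v,G)$, both the number and the effect of the cycle-creating sub-configurations lying in the radius-$2l$ ball of $v$: each carries a probability factor that is $O(1/N)$ or $O(1/N^2)$ and a combinatorial multiplicity growing like a fixed power of $(J-1)$ and of $(K-1)$ per level, and one shows that their aggregate contribution is absorbed, leaving $\mathbb{E}[w_{2l}^1(v,G)]\le\tfrac{J(J-1)^l-2}{J-2}$ in the claimed range. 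The condition $l\le\theta_1\log_{(J-1)^5(K-1)^3}N^2$ with $\theta_1<1$ is exactly the window in which this estimate closes: the base $(J-1)^5(K-1)^3$ records the per-level multiplicity of the relevant excluded sub-configurations, the power $N^2$ records the order of the probability that such a sub-configuration appears, and $\theta_1<1$ absorbs the polynomially small slack.

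I expect the main obstacle to be exactly this last step: making the configuration-model estimate rigorous --- correctly enumerating the sub-configurations that break the minimal-support construction, and showing that an isolated local cycle perturbs the minimum Hamming weight by a negligible amount on average --- so that the clean value $\tfrac{J(J-1)^l-2}{J-2}$, rather than something carrying lower-order corrections, comes out. The remaining ingredients (Theorem~\ref{theo_lowerbound}, monotonicity of the three channel bounds in $w$, the $Q$-function inequality, and the elementary counting of variable positions in the $(J,K)$-regular tree) are routine.
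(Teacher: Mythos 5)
Your proposal follows essentially the same route as the paper: Theorem~\ref{theo_lowerbound} together with monotonicity and the Gaussian $Q$-function lower bound reduces everything to $\overline{w}_{2l}^1(N,x^J,x^K)\le\overline{w}^{UB}_{2l}(N,x^J,x^K)$; the two large-$l$ regimes are settled by counting variable positions in $T_{2l}(v,G)$ (and trivially by $N$); and the small-$l$ regime rests on exactly the construction the paper calls a valid tree (each check satisfied by exactly two ones, giving weight $\tfrac{J(J-1)^l-2}{J-2}$), whose existence with high probability the paper proves by a sequential edge-revealing argument in the configuration model (Lemma~\ref{appendix_lemma}) and then combines with the crude bound on the complementary event, which is precisely the ``probability times weight penalty'' bookkeeping you sketch. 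The one step you flag as the main obstacle is indeed the content of the paper's Lemma~\ref{lemma_regular_w}, so your plan is the paper's proof in outline.
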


Note that $\overline{w}^{UB}_{2l}(N,x^J,x^K) $ depends on the parameters $l,N,J$, and $K$. Consequently, Theorem~\ref{theo_regular_lower} establishes a lower bound on the ensemble-average BER over different channel models as a function of the number of decoding iterations $l$, which holds for any $l$.

According to Theorem \ref{theo_regular_lower}, when the number of decoding iterations $l$ is constant, the error rate does not vanish as the block length $N$ increases. When the number of iterations $l$ increases as a function of $N$ and \( l \le \theta_1 \log_{(J-1)^5 (K-1)^3} N^2 \), it follows that:
\begin{equation}
    \overline{P}_e^{BI-AWGN} \ge \frac{e^{1/(\pi+ 2)}}{4} \sqrt{\frac{\pi+2}{\pi}} e^{\frac{2}{\sigma^2(J-2)}}2^{-\frac{1}{\sigma^2 \ln 2} \frac{J}{J-2} 2^{l \log_2(J-1)}}.
\end{equation}
\begin{equation}
    \overline{P}_e^{BSC} \ge q^\frac{J-4}{2(J-2)}{}2^{-|\log_2 q|\frac{J}{2(J-2)}2^{l\log_2(J-1)}},
\end{equation}
\begin{equation}
    \overline{P}_e^{BEC} \ge \epsilon^{-\frac{2}{J-2}}2^{-|\log_2 \epsilon|\frac{J}{J-2}2^{l\log_2(J-1)}},
\end{equation}
This implies that the lower bound of the average BER can be expressed as
\begin{equation}
 h2^{-a 2^{bl}}   
\end{equation}
for the BSC, BEC, and BI-AWGN channel,  where $h, a>0$ and $b\ge0$ are constants that depend on the specific channel and coding parameters. The parameter $b$, which is the main term determining the order of the lower bound, is given by $b = \log_2(J-1)$. Similarly, when the number of iterations $l$ satisfies \(\theta_1 \log_{(J-1)^5(K-1)^3} N^2 < l \le \log_{(J-1)(K-1)}\left(\left(1-\frac{K}{J(K-1)}\right)N\right),\) the lower bound can also be expressed as $h2^{-a 2^{bl}}$, with the dominant term is given by
$b = \log_2\left((J-1)(K-1)\right)$.

To prove the result of Theorem~\ref{theo_regular_lower}, we need an intermediate result as discussed below.
\begin{lemma}
\label{lemma_regular_w}
Consider the ensemble LDPC$(N, x^J, x^K)$ with fixed degrees $J$ and $K$, and a positive integer $l$. When $N$ is sufficiently large, then the expected minimum Hamming weight $\overline{w}_{2l}^1(N,x^J,x^K)$ is upper bounded by $\overline{w}^{UB}_{2l}(N,x^J,x^K)$. 
\end{lemma}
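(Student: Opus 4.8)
The plan is to upper-bound $\overline{w}_{2l}^{1}(N,x^{J},x^{K})=\mathbb{E}_{(v,G)}\bigl[w_{2l}^{1}(v,G)\bigr]$ by exhibiting, for each sampled pair $(v,G)$, an explicit element of the root-constrained tree code $\mathcal{C}_{2l}^{1}(v,G)$ whose Hamming weight is controlled by the combinatorics of the computation graph $T_{2l}(v,G)$. The basic building block is the \emph{minimal-propagation} codeword: set the root $v$ to $1$, then descend $T_{2l}(v,G)$ level by level and, at each check node, turn on exactly one of its still-undetermined variable children (the cheapest way to meet that parity constraint), leaving every other variable node at $0$. On a genuine tree this assignment satisfies every check, and its support is a subtree in which the root has $J$ check-children while each on-variable at an even level spawns $J-1$ further check-children; hence the support contains $1+J\sum_{i=0}^{l-1}(J-1)^{i}=\frac{J(J-1)^{l}-2}{J-2}$ variable nodes, which is the first branch of $\overline{w}_{2l}^{UB}(N,x^{J},x^{K})$ in the locally tree-like regime.

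First I would record the elementary counts on $T_{2l}(v,G)$: it has $J(J-1)^{i-1}(K-1)^{i}$ variable-node occurrences at level $2i$ for $i\ge 1$ and one at level $0$, so the total number of variable-node occurrences is $1+J(K-1)\sum_{j=0}^{l-1}[(J-1)(K-1)]^{j}=\frac{J(K-1)^{l+1}(J-1)^{l}-K}{(J-1)(K-1)-1}$, which is precisely the third branch. Since the number of \emph{distinct} variable nodes appearing in $T_{2l}(v,G)$ is at most this occurrence count and also at most $N$, every codeword of $\mathcal{C}_{2l}^{1}(v,G)$ has weight at most $\min\bigl\{N,\ \tfrac{J(K-1)^{l+1}(J-1)^{l}-K}{(J-1)(K-1)-1}\bigr\}$; comparing the two quantities yields the crossover $l=\log_{(J-1)(K-1)}\bigl((1-\tfrac{K}{J(K-1)})N\bigr)$ and thus the second branch. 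These two branches are deterministic---valid for every $G$ in LDPC$(N,x^{J},x^{K})$ and every $v$---provided one first checks that $\mathcal{C}_{2l}^{1}(v,G)\neq\emptyset$ so that $w_{2l}^{1}(v,G)$ is finite.

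The delicate branch is the first one: for $l\le\theta_{1}\log_{(J-1)^{5}(K-1)^{3}}N^{2}$ I must bound the \emph{expectation} $\overline{w}_{2l}^{1}(N,x^{J},x^{K})$ by $\frac{J(J-1)^{l}-2}{J-2}$, even though some graphs in the ensemble contain short cycles, for which the minimal-propagation subtree has repeated variable nodes and the clean count degrades. The plan is to split the expectation according to whether $T_{2l}(v,G)$ is a genuine tree: on that event $w_{2l}^{1}(v,G)\le\frac{J(J-1)^{l}-2}{J-2}$ by the construction above, and on its complement I fall back to the deterministic bound $w_{2l}^{1}(v,G)\le\frac{J(K-1)^{l+1}(J-1)^{l}-K}{(J-1)(K-1)-1}$. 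It then remains to bound $\Pr[T_{2l}(v,G)\text{ is not a tree}]$, which I would estimate in the configuration model by a first-moment count of short cycles through $v$ (equivalently, of coincident exploration paths); combining that estimate with the fallback weight and requiring the resulting deviation to stay below $\frac{J(J-1)^{l}-2}{J-2}$ is exactly what pins the admissible range down to $l\le\theta_{1}\log_{(J-1)^{5}(K-1)^{3}}N^{2}$, so the exponents $5,3$ and the power $N^{2}$ are the precise bookkeeping of this comparison.

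I expect this last step---the probabilistic control of non-tree-like computation graphs and the precise iteration threshold it produces---to be the main obstacle; the geometric identities and the deterministic fallbacks are routine by comparison. Two smaller points also require care: verifying $\mathcal{C}_{2l}^{1}(v,G)\neq\emptyset$ (the root is not forced to $0$ by the local parity equations), so the minimum defining $w_{2l}^{1}(v,G)$ is taken over a nonempty set; and checking that the minimal-propagation assignment is a genuine codeword of $\mathcal{C}_{2l}(v,G)$---that every check occurring anywhere in $T_{2l}(v,G)$, not only those inside the support subtree, is incident to an even number of on-variables---which is immediate on a tree but must be argued when occurrences of the same node coincide.
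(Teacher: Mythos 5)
Your overall architecture matches the paper's: exhibit a weight-$\frac{J(J-1)^l-2}{J-2}$ codeword supported on a minimal subtree when the local structure is favourable, fall back to the count of variable nodes in $T_{2l}(v,G)$ (capped by $N$) otherwise, and show the bad event contributes negligibly to the expectation. Your geometric counts, the crossover with $N$, and the caveats about nonemptiness of $\mathcal{C}_{2l}^{1}(v,G)$ and about repeated node occurrences are all consistent with what the paper does (the paper is equally brief about nonemptiness).

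The genuine gap is in the step you yourself flag as delicate: you condition on the \emph{entire} computation graph $T_{2l}(v,G)$ being a tree and plan to bound the complement by a first-moment cycle count. That event is far stronger than what is needed, and the quantitative loss is fatal for the stated range of $l$. A first-moment count gives $\Pr[T_{2l}\text{ not a tree}]=O\bigl([(J-1)(K-1)]^{2l}/N\bigr)$ (square of the neighborhood size over $N$); multiplied by the fallback weight $n^{*}_{2l}=\Theta\bigl([(J-1)(K-1)]^{l}\bigr)$, the deviation term is $O\bigl((J-1)^{3l}(K-1)^{3l}/N\bigr)$, which exceeds the target $\Theta\bigl((J-1)^{l}\bigr)$ before $l$ reaches $\theta_{1}\log_{(J-1)^{5}(K-1)^{3}}N^{2}$ whenever $(K-1)^{3}>J-1$, i.e.\ for essentially all regular ensembles of interest; your bookkeeping would only cover $l\lesssim\theta\log_{(J-1)^{4}(K-1)^{6}}N^{2}$, a strictly smaller range when $K>J$. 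The paper avoids this by working with a much weaker event: it only requires that $T_{2l+1}(v,G)$ contain a ``valid tree'' --- precisely the support of your minimal-propagation codeword, grown one extra level to height $2l+1$ so that each leaf variable's remaining check neighbors are fresh and no parity outside the support can be violated --- and it lower-bounds the probability of this event by sequentially revealing only the $\Theta\bigl((J-1)^{l}\bigr)$ edges of that subtree and requiring each revealed edge to hit a node not already present in the computation graph. This yields a failure probability $O\bigl((J-1)^{2l}(K-1)^{l}/N\bigr)$, smaller than yours by a factor of order $(K-1)^{l}$, and it is this sparser conditioning event (not a sharper cycle count for the full neighborhood) that the claimed exponents $5,3$ and the power $N^{2}$ rest on. To repair your argument you would need to replace ``$T_{2l}$ is a tree'' by ``the minimal-propagation support embeds without collisions,'' which is exactly the paper's valid-tree construction.
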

\begin{proof}[Proof of Lemma \ref{lemma_regular_w}]
 The detailed proof is provided in APPENDIX \ref{prf_BER}. In the proof, we first select a Tanner graph \(G\) uniformly at random from LDPC\((N, x^J, x^K)\), and then choose a variable node \(v\) uniformly at random from \(G\). We examine the set of codewords in $\mathcal{C}_{2l}^1(v,G)$ with weight $\frac{J(J-1)^l - 2}{J - 2}$ and analyze their characteristics. It is shown that, with high probability, $\mathcal{C}_{2l}^1(v,G)$ contains a codeword with weight less than $\frac{J(J-1)^l - 2}{J - 2}$. For $\mathcal{C}_{2l}^1(v,G)$ that does not contain such a low-weight codeword, we provide an upper bound on $w_{2l}^1(v,G)$ by computing the number of distinct variable nodes in the computation graph $T_{2l}(v,G)$. By combining these two cases, we obtain an upper bound on the expected value of $w_{2l}^1(v,G)$.
\end{proof}
Now, we return to the proof of Theorem~\ref{theo_regular_lower}.

\begin{proof}[Proof of Theorem~\ref{theo_regular_lower}]
We consider the BI-AWGN channel. According to Theorem~\ref{theo_lowerbound}, the average BER \( \overline{P}_e^{BI-AWGN} \) under $l$ iterations BP decoding satisfies
\begin{equation}
  \begin{split}
\overline{P}_e^{BI-AWGN} \ge Q\left(\sqrt{\frac{\overline{w}_{2l}^1(N,x^J,x^K)}{\sigma^2}}\right).
\end{split}  
\end{equation}
According to Lemma~\ref{lemma_regular_w}, we have 
\begin{equation}
   \overline{w}_{2l}^1(N,x^J,x^K) \le \overline{w}^{UB}_{2l}(N,x^J,x^K).
\end{equation}
Since \( Q(x) \) is strictly decreasing for \( x > 0 \), and satisfies the inequality \cite{cote2012chernoff}  
 \begin{equation}
     Q(x) \ge \frac{e^{1/(\pi+ 2)}}{4} \sqrt{\frac{\pi+2}{\pi}}e^{-x^2}, \quad \text{for any } x \ge 0.
\end{equation}
Theorem~\ref{theo_regular_lower} is thus established.
\end{proof}

\subsection{The Discussion on the Tightness of the Lower Bound}
\label{section_re_B}
In this Subsection, we will discuss the asymptotic tightness of the lower bound on the average BER over the regular LDPC code ensemble, as derived in Section \ref{regular_ensemble}. To achieve this goal, we refer to an upper bound on the BER for an LDPC code with tree-like computation graphs, as presented in \cite{lentmaier2005analysis}, and compare it with our lower bound. Although the upper bound in \cite{lentmaier2005analysis} was originally derived under the assumption that \( l = O(\log N) \), it remains valid for any \( l \), provided the computation graph retains a tree-like structure. Therefore, we rewrite it as follows. 

\begin{theoremNoParens}[\cite{lentmaier2005analysis}]
\label{theo_upper}
Consider the iterative decoding of an $(N, J, K)$-regular LDPC code with $J \ge 3$, under the assumption that all computation graphs corresponding to this code with height $2\frac{\log_2N}{\log_2(J-1)(K-1)}$ are tree-like. Assume that $l_0$ is a constant determined by $J$, $K$, and the channel. If the number of iterations $l \in \left(l_0,\frac{\log_2N}{\log_2(J-1)(K-1)}\right)$, then there exists a constant $a_0$ such that the BER $P_e$ is approximately upper-bounded by the inequalities 
\begin{equation}
\label{7_1}
    P_e < 2^{-a_02^{l\log_2(J-1)}}.
\end{equation}
\end{theoremNoParens}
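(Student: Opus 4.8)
The plan is to reproduce the density‑evolution argument behind \cite{lentmaier2005analysis}. Since, by hypothesis, every computation graph associated with the code of height up to $2\frac{\log_2 N}{\log_2(J-1)(K-1)}$ is tree‑like and $2l$ lies below this height for $l$ in the stated range, the bit error probability $P_e$ of the code after $l$ iterations of flooding BP coincides with the error probability of BP run on the $(J,K)$‑regular tree $T_{2l}$. On a tree BP computes exact posterior marginals, so the decision rule~(\ref{v_all}) at a node is the bit‑MAP decision given all channel observations in its subtree, and at every node the incoming messages are mutually independent. This reduces the problem to tracking a single scalar ``error parameter'' through the updates~(\ref{v2c})--(\ref{c2v}).

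First I would attach to the variable‑to‑check message density at iteration $l$ a scalar $p_l$ that (i) controls the root error probability from above and (ii) composes cleanly under the two updates. For the BEC one takes $p_l$ to be the erasure probability, and the recursion is exact: a check‑to‑variable message is erased with probability $1-(1-p_l)^{K-1}\le(K-1)p_l$, so $p_{l+1}\le\epsilon(K-1)^{J-1}p_l^{J-1}$. For the BSC and the BI‑AWGN channel I would take $p_l$ to be the Bhattacharyya parameter of the message density: independence of the $K-1$ inputs at a check node gives $Z_{\mathrm{C2V}}\le 1-(1-Z_{\mathrm{V2C}})^{K-1}\le(K-1)Z_{\mathrm{V2C}}$, while independence of the channel observation and the $J-1$ incoming check‑to‑variable messages at a variable node makes the Bhattacharyya parameters multiply, $Z_{\mathrm{V2C}}\le Z_{\mathrm{ch}}Z_{\mathrm{C2V}}^{\,J-1}$. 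Composing the two yields, uniformly over the three channels, the key recursion
\begin{equation}
p_{l+1}\ \le\ \kappa\, p_l^{\,J-1},
\end{equation}
for a constant $\kappa$ depending only on $J$, $K$ and the channel, together with $P_e\le c_1 p_l^{\,J}$ at the degree‑$J$ root (from $P_e^{\mathrm{MAP}}\le\tfrac12 Z_{\mathrm{root}}$ and $Z_{\mathrm{root}}\le Z_{\mathrm{ch}}\big((K-1)p_l\big)^{J}$).

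Next I would linearize this recursion around its nontrivial fixed point. Writing $u_l=\log_2(1/p_l)$ turns it into $u_{l+1}\ge(J-1)u_l-\log_2\kappa$, whose fixed point is $u^\star=\frac{\log_2\kappa}{J-2}$; with $v_l=u_l-u^\star$ one gets the clean inequality $v_{l+1}\ge(J-1)v_l$. Because the channel is below the BP threshold, density evolution drives $p_l\to 0$, so there is a finite $l_0$, depending only on $J$, $K$ and the channel, with $p_{l_0}<\kappa^{-1/(J-2)}$, i.e.\ $v_{l_0}>0$; this is exactly the role played by $l_0$ in the statement. Iterating for $l>l_0$ gives $v_l\ge(J-1)^{\,l-l_0}v_{l_0}$, hence $p_l\le 2^{-(J-1)^{l-l_0}v_{l_0}}$, and finally $P_e\le c_1 p_l^{\,J}\le 2^{-a_0\,2^{\,l\log_2(J-1)}}$ for all sufficiently large $l$ after absorbing $c_1$ and $J(J-1)^{-l_0}$ into $a_0$. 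The degree‑$J$ root only strengthens the bound relative to the degree‑$(J-1)$ message recursion, so the exponent base is exactly $J-1$, matching~(\ref{7_1}).

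The hard part is making the scalar surrogate rigorous for the non‑erasure channels: density evolution there acts on message \emph{densities}, not scalars, so one must commit to a surrogate, prove that it dominates the true bit error probability, and prove the sub‑multiplicative check‑node bound and the multiplicative variable‑node identity with explicit constants rather than merely their first‑order approximations. The Bhattacharyya parameter is the standard device here, but verifying $P_e\le\tfrac12 Z_{\mathrm{root}}$ and $Z_{\mathrm{C2V}}\le(K-1)Z_{\mathrm{V2C}}$ carefully — and identifying the critical value $\kappa^{-1/(J-2)}$ and the resulting $l_0$ — is the delicate part, which is precisely why the bound~(\ref{7_1}) is stated as holding only \emph{approximately} and only for $l>l_0$. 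Once the scalar recursion is established, the linearization and induction are routine.
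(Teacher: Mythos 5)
The paper does not actually prove this statement: Theorem \ref{theo_upper} is imported verbatim (with a relaxed iteration range) from \cite{lentmaier2005analysis}, and Section \ref{section_re_B} only uses it as a comparison point for the lower bound, so there is no in-paper proof to compare against. Judged on its own, your reconstruction is a sound and essentially standard way to obtain the bound: on a tree-like computation graph BP is exact bit-MAP, the Bhattacharyya parameter gives $Z_{\mathrm{V2C}}=Z_{\mathrm{ch}}Z_{\mathrm{C2V}}^{J-1}$ at variable nodes and $Z_{\mathrm{C2V}}\le 1-(1-Z_{\mathrm{V2C}})^{K-1}\le (K-1)Z_{\mathrm{V2C}}$ at check nodes, and the resulting recursion $p_{l+1}\le\kappa p_l^{J-1}$ with $P_e\le\tfrac12 Z_{\mathrm{root}}$ yields exactly the doubly exponential decay with base $J-1$ after the linearization you describe; your identification of $l_0$ with the first iteration at which $p_{l_0}<\kappa^{-1/(J-2)}$ matches the role $l_0$ plays in the statement. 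This differs in technique from \cite{lentmaier2005analysis}, which works directly with recursive bounds on the message error probabilities rather than a Bhattacharyya surrogate, but both routes give the same leading exponent $2^{l\log_2(J-1)}$, and yours is arguably cleaner because the check- and variable-node steps are single inequalities valid for all three channels at once.

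Two points you should make explicit. First, the existence of $l_0$ requires the channel to be below the BP threshold (otherwise density evolution does not drive the error to zero and the claim is vacuous); the theorem hides this inside ``$l_0$ is a constant determined by $J$, $K$, and the channel,'' and your appeal to ``$p_l\to 0$'' should be justified via $Z\le 2\sqrt{P_e(1-P_e)}$ so that vanishing DE error probability indeed forces the Bhattacharyya parameter below $\kappa^{-1/(J-2)}$ at some finite $l_0$. Second, the final absorption of the constant $c_1$ and the factor $J(J-1)^{-l_0}$ into $a_0$ only works for $l$ large enough (or with a smaller $a_0$), which is consistent with the theorem's ``approximately upper-bounded'' phrasing but should be stated rather than left implicit.
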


In order to meet the requirements of low latency and high throughput, we consider the asymptotic behavior of the average BER lower bound as $N$ tend to infinity, where the number of iterations $l$ grows with $N$ and satisfies $l\le \log_{(J-1)(K-1)}\left(\frac{JK-J-K}{J(K-1)}N\right)$. This range reflects practical constraints on decoding latency. For example, as reported in \cite{tarver2021gpu}, a stringent latency budget permits only five flooding iterations for a code of length 25,334. In this case, both the lower bound in Theorem \ref{theo_regular_lower} and the upper bound in Theorem \ref{theo_upper} are in the form of $h2^{-a2^{bl}}$, where $h,a >0$ and $b\ge0$ are constants dependent on the channel and coding schemes. Among these parameters, $b$ is the main term determining the order of the bound. We investigate the tightness of the lower bound by dividing the analysis into two regimes according to whether the number of iterations exceeds or falls below $\theta_1 \log_{(J-1)^5(K-1)^3} N^2$. When the number of iterations $l\in (l_0,\theta_1 \log_{(J-1)^5(K-1)^3} N^2)$, the main term \( b \) in both the lower and upper bounds satisfies
\begin{equation}
  b=\log_2(J-1), 
\end{equation}
which indicates that the lower bound is asymptotically tight.

When the number of iterations $l\in (\theta_1 \log_{(J-1)^5(K-1)^3} N^2,\log_{(J-1)(K-1)}(\frac{JK-J-K}{J(K-1)}N))$, the dominant term in the upper bound is $b=\log_2(J-1)$, whereas that in the lower bound is $b=\log_2(J-1)(K-1)$. This indicates that a slight discrepancy persists between the upper bound and lower bounds in their leading asymptotic behavior.

\section{Performance bounds of irregular ldpc code ensembles}
\label{sec3}

For an irregular LDPC code ensemble LDPC$(N, L, R)$ with a bounded maximum degree, such as the quasi-cyclic (QC) LDPC code ensemble, let $J_{max}$ and $K_{max}$ denote the maximum degrees of the variable and check nodes, respectively. Then, the average BER of the regular ensemble $\text{LDPC}(N, x^{J_{max}}, x^{K_{max}})$ provides a lower bound on the average BER of the irregular ensemble LDPC$(N,L,R)$. Specifically, the BER of the irregular ensemble LDPC$(N,L,R)$ can also be expressed in the form $h2^{-a2^{bl}}.$ 

However, when there is a significant disparity between the maximum and minimum degrees, this bound becomes relatively loose. To obtain a tighter lower bound, we also propose an iterative method.

\begin{theorem}
\label{theo_irregular_lower}
Consider the ensemble LDPC$(N, L, R)$. When BP decoding is performed with \( l \) iterations and $N$ is sufficiently large, the average BER \( \overline{P}_e \) over the ensemble satisfies
\begin{equation}
\begin{split}
  \overline{P}_e^{BI-AWGN} \ge Q\left(\sqrt{\frac{\overline{w}^{UB}_{2l}(N,L,R)}{\sigma^2}}\right)\ge \frac{e^{1/(\pi+ 2)}}{4} \sqrt{\frac{\pi+2}{\pi}} 2^{-\frac{1}{ \sigma^2\ln2} \overline{w}^{UB}_{2l}(N,L,R)}
\end{split}  
\end{equation}
for the BI-AWGN channel with noise variable $\sigma^2$,  
\begin{equation}
\begin{split}
    \overline{P}_e^{BSC} \ge q^{\frac{\overline{w}^{UB}_{2l}(N,L,R)+1}{2}}
\end{split}
\end{equation}
for the BSC with crossover probability $q$, 
\begin{equation}
\begin{split}
    \overline{P}_e^{BEC} \ge \epsilon^{\overline{w}^{UB}_{2l}(N,L,R)}
\end{split}
\end{equation}
for the BEC with erasure probability $\epsilon$, where $\overline{w}^{UB}_{2l}(N,L,R)$ is given by
\begin{equation}
\label{irr_eqw}
\overline{w}^{UB}_{2l}(N,L,R)  \triangleq N\bigg(1- \prod_{t=1}^l P_{2t}\bigg),
\end{equation}
\begin{equation}
\label{eq1}
P_{2t}=\sum_{d=1}^\infty L_d \bigg(\widetilde{P}_{2t-1}\bigg)^d,
\end{equation}

\begin{equation}
\label{eq2}
\widetilde{P}_{2t-1}= 
\begin{cases}\widetilde{P}_{2t-2}, &\text{if }l\le l_1\\
\sum_{d=1}^\infty \rho_d \bigg(\widetilde{P}_{2t-2}\bigg)^{d-1},&\text{if }l> l_1
\end{cases}
\end{equation}

\begin{equation}
\label{eq3}
\widetilde{P}_{2t-2} = \sum_{d=1}^\infty \lambda_d \bigg(\widetilde{P}_{2t-3}\bigg)^{d-1}, 
\end{equation}

\begin{equation}
\label{eq4}
\widetilde{P}_{1} =
\begin{cases}
1-\frac{1}{N},&\text{if } l \le l_1\\
\sum_{d=1}^\infty\rho_d\bigg(1-\frac{1}{N}\bigg)^{d-1}, &\text{if }l > l_1
\end{cases}
\end{equation}
$l_1=\theta_1 \log_{(J_{max}-1)^5(K_{max}-1)^3}N^2$ with $\theta_1<1$, $\lambda$ and $\rho$ denote the variable node and check node degree distributions from the edge perspective, respectively, as defined in Section~\ref{submodel}. Moreover, equations (\ref{eq1}), (\ref{eq2}), and (\ref{eq3}) are valid for $t \ge 1$, $t \ge 2$, and $t \ge 2$, respectively.
\end{theorem}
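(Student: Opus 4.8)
The plan is to reduce the theorem to a single statement about the tree code ensemble, namely that the expected minimum Hamming weight satisfies $\overline{w}_{2l}^1(N,L,R)\le \overline{w}^{UB}_{2l}(N,L,R)$, and then to feed this into Theorem~\ref{theo_lowerbound} in exactly the way Theorem~\ref{theo_regular_lower} was obtained from Lemma~\ref{lemma_regular_w}. Once the weight bound is in hand, the BSC and BEC bounds follow at once from Theorem~\ref{theo_lowerbound} and the monotonicity of $q^{(w+1)/2}$ and $\epsilon^{w}$ in $w$, while the BI-AWGN bound follows from the monotonicity of $Q$ together with the Chernoff-type inequality $Q(x)\ge \frac{e^{1/(\pi+2)}}{4}\sqrt{(\pi+2)/\pi}\,e^{-x^{2}}$ of~\cite{cote2012chernoff}. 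So the entire content is the weight bound, which is the irregular analogue of Lemma~\ref{lemma_regular_w}.

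To bound $\overline{w}_{2l}^1(N,L,R)=\mathbb{E}_{(v,G)}[w_{2l}^1(v,G)]$ I would, as in Lemma~\ref{lemma_regular_w}, split at the threshold $l_1$, using in both regimes the elementary fact that $w_{2l}^1(v,G)$ is at most the support size of \emph{any} codeword of $\mathcal{C}_{2l}^1(v,G)$, so it is enough to exhibit one cheap codeword and bound its weight. For $l>l_1$ I would first note that $\mathcal{C}_{2l}^1(v,G)\neq\emptyset$ (a codeword of $G$ having $v$ in its support restricts to a valid assignment on $T_{2l}(v,G)$, since every check node retains its full degree in the computation graph of even height $2l$), and then use the crudest bound: the weight of any such codeword is at most the number of \emph{distinct} variable nodes occurring in $T_{2l}(v,G)$, which, since the Tanner graph is bipartite, equals $\#\{v': dist(v,v')\le 2l\}$. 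For $l\le l_1$ I would instead build the ``one active child per check'' codeword already used in the regular case: set the root to $1$ and, descending level by level, make exactly one child of each check that currently carries an odd partial sum equal to $1$ and all its other children $0$; whenever the variable nodes activated this way are pairwise distinct, this is a genuine codeword of $\mathcal{C}_{2l}^1(v,G)$ whose weight is controlled by the variable-node branching alone --- which is precisely why, for $l\le l_1$, the recursion (\ref{eq2}), (\ref{eq4}) makes check nodes act as degree-$2$ pass-throughs. The complementary event (a collision among the activated nodes, or a non-tree-like depth-$2l$ neighbourhood of $v$) is absorbed by the trivial bound $w_{2l}^1(v,G)\le N$; the choice $\theta_1<1$ inside $l_1=\theta_1\log_{(J_{max}-1)^5(K_{max}-1)^3}N^2$ is exactly what makes the probability of this event, multiplied by $N$, of lower order in $N$, so it does not disturb the leading term --- the same mechanism as in Lemma~\ref{lemma_regular_w}.

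The substantive step, and the one I expect to be the real obstacle, is converting the counting of distinct neighbours into the closed recursion (\ref{irr_eqw})--(\ref{eq4}). Using that the ensemble is invariant under relabelling of variable nodes, for $v,v'$ chosen independently and uniformly at random one has $\mathbb{E}_{(v,G)}\bigl[\#\{v': dist(v,v')\le 2l\}\bigr]=N\cdot\Pr[dist(v,v')\le 2l]$, so it suffices to lower bound the survival probability $\Pr[dist(v,v')>2l]$ by $\prod_{t=1}^{l}P_{2t}$. I would do this by a layered exploration of the random configuration defining the Tanner graph: reveal the matching one variable-level at a time around $v$, and let $P_{2t}$ bound from below the conditional probability that the target $v'$ is still unreached after the $t$-th expansion given it was unreached after the first $t-1$. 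A degree-$d$ variable node survives an expansion iff all $d$ of its sockets lead to ``safe'' check sockets, giving the factor $(\widetilde{P}_{2t-1})^{d}$ averaged against the node-perspective law $L_d$, i.e.\ (\ref{eq1}); a degree-$d$ check socket is safe iff its other $d-1$ sockets are safe, giving $(\widetilde{P}_{2t-2})^{d-1}$ averaged against the edge-perspective law $\rho_d$, i.e.\ (\ref{eq2}); the variable-socket recursion (\ref{eq3}) is the analogous computation from the edge perspective; and the base case (\ref{eq4}) merely records that a freshly revealed variable socket misses $v'$ with probability $1-1/N$. Chaining these conditional bounds gives $\Pr[dist(v,v')>2l]\ge\prod_{t=1}^{l}P_{2t}$, hence $\overline{w}_{2l}^1(N,L,R)\le N\bigl(1-\prod_{t=1}^{l}P_{2t}\bigr)$, and the cap $\overline{w}^{UB}_{2l}(N,L,R)\le N$ is automatic since each $P_{2t}\le 1$. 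The difficulty is precisely that this recursion is the law of an idealized branching process, whereas the finite random graph has cycles (which make the real frontier smaller, helping survival) and is sampled without replacement (which makes later exposures more likely to hit $v'$, hurting survival); one must show that, within the stated range of $l$ and for $N$ sufficiently large, these two corrections are of lower order so that the inequality survives. Making this rigorous, together with pinning down the non-emptiness of $\mathcal{C}_{2l}^1(v,G)$ in the $l>l_1$ regime, is where the real work lies.
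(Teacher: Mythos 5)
Your proposal follows essentially the same route as the paper: the theorem is reduced to the irregular analogue of Lemma~\ref{lemma_regular_w} (i.e.\ $\overline{w}_{2l}^1(N,L,R)\le \overline{w}^{UB}_{2l}(N,L,R)$) and then combined with Theorem~\ref{theo_lowerbound} and the Chernoff-type bound on $Q$, exactly as in Theorem~\ref{theo_regular_lower}; for $l\le l_1$ the paper likewise bounds the expected minimum weight by the expected number of distinct variable nodes of a computation graph in which check nodes act as degree-$2$ pass-throughs (your ``one active child per check'' codeword, formalized there as the valid-tree construction inherited from Lemma~\ref{lemma_regular_w} and the auxiliary ensemble $\text{LDPC}(N,L,x^2)$), while for $l>l_1$ it uses your crude bound by the number of distinct variable nodes in $T_{2l}(v,G)$, and in both regimes it writes this count as $N\bigl(1-\mathbb{P}(dist(v_i,v_j)>2l)\bigr)$ with $\mathbb{P}(dist>2l)=\prod_t P_{2t}$. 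The one place where you diverge is the justification of the recursion (\ref{eq1})--(\ref{eq4}): you propose a layered edge-revealing argument aiming at a genuine one-sided inequality $\mathbb{P}(dist>2l)\ge\prod_t P_{2t}$ with explicit control of cycle and without-replacement corrections, whereas the paper derives the recursion as an asymptotic \emph{equality} via a cavity/mean-conditional-indicator computation in the style of the shortest-path-distribution analysis of random graphs, valid under the assumption that the local neighbourhood is tree-like and $N$ is sufficiently large, and supported numerically rather than by a finite-$N$ error bound. So the ``real work'' you flag (making the one-sided bound rigorous, plus non-emptiness of $\mathcal{C}_{2l}^1(v,G)$) is not actually carried out in the paper either; your exploration-based bound would, if completed, be somewhat stronger than what the paper establishes, at the cost of the correction analysis you correctly identify as the obstacle.
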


$P_{2t}$ can be computed recursively using equations (\ref{eq1})$\sim$(\ref{eq4}), with $\widetilde{P}_{2t-1},\widetilde{P}_{2t-2}$ acting as intermediate terms in the recursion. The practical interpretations of both $P_{2t}$, $\widetilde{P}_{2t-1}$, and $\widetilde{P}_{2t-2}$ are discussed in detail in Appendix \ref{appendix_irregular_w}, where we show that these quantities correspond to specific conditional probabilities. 

To better justify Theorem \ref{theo_irregular_lower}, we introduce an intermediate Lemma.

\begin{lemma}
\label{lemma_irregular_w}
Consider the ensemble LDPC$(N, L, R)$ and a positive integer $l$. When $N$ is sufficiently large, then the expected minimum Hamming weight $\overline{w}_{2l}^1(N,L,R)$ is upper bounded by $\overline{w}^{UB}_{2l}(N,L,R)$. 
\end{lemma}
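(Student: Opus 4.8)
The plan is to mirror the structure of the proof of Lemma~\ref{lemma_regular_w}, but to replace the crude degree-by-degree counting used in the regular case with a recursive census of the distinct variable nodes appearing in the computation graph $T_{2l}(v,G)$. Fix a Tanner graph $G$ drawn uniformly from LDPC$(N,L,R)$ and a variable node $v$ drawn uniformly from $G$. The key observation, already exploited for regular codes, is that $w_{2l}^1(v,G)$ is bounded above by the number of \emph{distinct} variable nodes in $T_{2l}(v,G)$: indeed, the assignment that puts a $1$ on every variable node that appears at an odd-depth ``first occurrence'' layer together with the root, and a $0$ elsewhere, can be checked to satisfy every parity constraint of the tree code once one argues that each check constraint in $T_{2l}$ is incident to an even number of $1$'s (here one uses the tree structure of the computation graph and the fact that a genuinely repeated variable node contributes the same bit to all its copies). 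Hence $\overline{w}_{2l}^1(N,L,R) = \mathbb{E}_{(v,G)}[w_{2l}^1(v,G)] \le N - \mathbb{E}_{(v,G)}[\#\{\text{variable nodes of } G \text{ \emph{not} reachable within the graph explored by } T_{2l}(v,G)\}]$, so it suffices to lower bound the expected number of variable nodes that do \emph{not} appear in $T_{2l}(v,G)$, and this is exactly $N\prod_{t=1}^{l}P_{2t}$ in the statement.

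The heart of the argument is therefore the probabilistic analysis of which variable nodes are \emph{missed} by the breadth-first exploration defining $T_{2l}(v,G)$. I would set this up from the graph-theoretic viewpoint advertised in the introduction: condition on a target variable node $u\ne v$ and compute the probability that $u$ lies at distance greater than $2l$ from $v$ in $G$ (equivalently, that $u$ is not discovered in the first $l$ ``rounds'' of variable-to-check-to-variable expansion). Using the configuration-model construction of LDPC$(N,L,R)$ and the standard fact that, as $N\to\infty$, the local neighborhood of a random vertex converges to the Galton--Watson tree with offspring distributions $\lambda,\rho$, one processes the exploration layer by layer: at each variable-to-check step a target is avoided if it is avoided by each of the $d$ sockets of the variable node (probability $(\widetilde P_{2t-1})^d$, averaged over $L_d$), and at each check-to-variable step it is avoided along each of the $d-1$ onward edges of the check (probability $(\widetilde P_{2t-2})^{d-1}$, averaged over $\rho_d$), which reproduces recursions (\ref{eq1})--(\ref{eq3}); the base cases (\ref{eq4}) encode that the root alone has been exposed, i.e.\ $u$ is missed by a single edge with probability $1-1/N$. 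The two-case split on $l\lessgtr l_1$ in (\ref{eq2}) and (\ref{eq4}) is the device — borrowed from the regular analysis and Lemma~\ref{lemma_regular_w} — that lets us stop charging check constraints once the computation graph is large enough that the ``low-weight codeword exists with high probability'' phenomenon of Lemma~\ref{lemma_regular_w} takes over; in that regime one simply bounds $w_{2l}^1(v,G)$ by $N$ times the vertex-discovery fraction without the extra $\rho$-contraction, which is why $\widetilde P_{2t-1}$ collapses to $\widetilde P_{2t-2}$.

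Assembling the pieces: by independence of the ``missed'' events across the $l$ rounds (in the $N\to\infty$ tree limit) and linearity of expectation over the $N-1$ choices of $u$, the expected number of variable nodes absent from $T_{2l}(v,G)$ is $(N-1)\prod_{t=1}^{l}P_{2t} + o(N)$, and combined with the deterministic bound of the first paragraph this yields $\overline{w}_{2l}^1(N,L,R)\le N\big(1-\prod_{t=1}^{l}P_{2t}\big) = \overline{w}^{UB}_{2l}(N,L,R)$ for $N$ sufficiently large. The main obstacle I anticipate is \emph{not} the combinatorial recursion itself but making the two approximations rigorous: first, passing from the finite configuration model to the Galton--Watson limit while controlling the $o(N)$ error uniformly (dependencies between the exploration processes of different targets $u,u'$, and the conditioning on ``no parallel edges'' in the definition of LDPC$(N,L,R)$); and second, handling the transition at $l=l_1$ cleanly — one must invoke the high-probability existence of a sub-$\frac{J(J-1)^l-2}{J-2}$-weight codeword from the proof of Lemma~\ref{lemma_regular_w} (adapted to $J_{max},K_{max}$) to justify dropping the check-contraction step, and verify that this does not overshoot the claimed bound. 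Both of these are deferred to Appendix~\ref{appendix_irregular_w}, where the conditional-probability interpretations of $P_{2t}$, $\widetilde P_{2t-1}$, $\widetilde P_{2t-2}$ make the bookkeeping transparent.
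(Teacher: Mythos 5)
Your machinery for the counting step --- bounding the minimum weight by the number of distinct variable nodes discovered by the exploration of $T_{2l}(v,G)$ and computing $\mathbb{P}(dist(v_i,v_j)>2l)$ layer by layer in the local-tree limit, with $L_d$ at the root, $\lambda_d$ at intermediate variable nodes and $\rho_d$ at check nodes --- is essentially the cavity-style recursion the paper uses in Appendix~\ref{appendix_irregular_w}. The genuine gap is in how you account for the two branches of $\overline{w}^{UB}_{2l}(N,L,R)$. Your only structural bound is $w_{2l}^1(v,G)\le$ (number of distinct variable nodes of $G$ in $T_{2l}(v,G)$), and the expectation of that quantity is exactly the $l>l_1$ branch, i.e.\ the recursion \emph{with} the $\rho$-contraction in (\ref{eq2}) and (\ref{eq4}). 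The $l\le l_1$ branch (where $\widetilde{P}_{2t-1}=\widetilde{P}_{2t-2}$ and $\widetilde{P}_1=1-1/N$) is \emph{smaller} than the distinct-node count, so it cannot be reached from your bound by ``dropping the $\rho$-contraction'': dropping the contraction does not upper-bound the number of discovered nodes, it shrinks it. In the paper that branch comes from a different object: for $l\le l_1$ one invokes, with high probability, the existence of a low-weight codeword in $\mathcal{C}^1_{2l}(v,G)$ supported on a valid-tree-like subgraph in which every check node has only two active neighbours (the adaptation of Lemma~\ref{lemma_regular_w} with $J_{max},K_{max}$), so its weight is counted as the number of distinct variable nodes in a computation graph of the auxiliary ensemble LDPC$(N,L,x^2)$ --- check nodes of degree two --- which is precisely what the no-contraction recursion computes. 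You also have the regimes inverted: you say the contraction is dropped ``once the computation graph is large enough,'' i.e.\ for large $l$, whereas in (\ref{eq2}) it is dropped exactly for $l\le l_1$, the regime where the low-weight codeword exists with high probability, and the crude distinct-node bound (with contraction) is what is used for $l>l_1$. As written, your argument proves only the $l>l_1$ formula and would attach the wrong expression to the wrong regime.

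A secondary issue: your explicit codeword (``1 on every variable node at an odd-depth first occurrence together with the root'') is not well formed --- variable nodes occupy even depths of the computation graph, and such an assignment need not satisfy the parity checks. For the $l>l_1$ case nothing of this sort is needed: any codeword of $\mathcal{C}^1_{2l}(v,G)$ trivially has weight at most the number of distinct variable nodes in $T_{2l}(v,G)$ (granting nonemptiness), which is all the paper uses there; the nontrivial codeword construction belongs to the $l\le l_1$ case and is the valid-tree construction described above.
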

\begin{proof}[Proof of Lemma \ref{lemma_irregular_w}]
A detailed proof is provided in APPENDIX \ref{appendix_irregular_w}. Similar to Lemma \ref{lemma_regular_w}, we estimate $\overline{w}_{2l}^1(N,L,R)$ based on the number of distinct variable nodes in the computation graph. However, computing the number of distinct variable nodes in the computation graph is more challenging for irregular codes. To address this, we reformulate this problem as finding the distribution of the shortest path length in a random bipartite graph. Although this problem has been studied in \cite{nitzan2016distance} within the context of random graphs, we further explore it in the context of random bipartite graphs. In our study, we randomly select an LDPC code from the ensemble. Then, we randomly choose a pair of variable nodes from this code. Using graph-theoretic tools, we calculate the distribution of the distance between these two variable nodes. 
\end{proof}

Although Lemma \ref{lemma_irregular_w} is derived under the asymptotic assumption that $N$ is sufficiently large, numerical evidence indicates that the resulting estimates remain accurate even for moderate block lengths. A more detailed justification of this observation is also provided in Appendix \ref{appendix_irregular_w}.

As for the proof of Theorem \ref{theo_irregular_lower}, it is similar to that of Theorem \ref{theo_regular_lower} and can be obtained by combining Theorem \ref{theo_lowerbound} and Lemma \ref{lemma_irregular_w}; thus, it is omitted here for brevity.

\section{simulation}
\label{simulation}
In this section, we performed simulations to examine the relationship between the lower bound of the average BER over the LDPC code ensemble and the number of decoding iterations. Since the lower bound of the average BER exhibits a double-exponential decay with respect to the number of decoding iterations $l$, we define the parameter $\gamma= \log_2(-\log_2(P))$ to capture the magnitude of the error rate $P$, and focus on the relationship between $\gamma$ and the number of iterations.

In Fig. \ref{fig_4}, we illustrate the parameter $\gamma$ corresponding to the lower bound on the average BER of both the regular and irregular LDPC code ensembles under different numbers of decoding iterations, as derived from Theorems \ref{theo_regular_lower} and \ref{theo_irregular_lower}. In contrast, we also plot the $\gamma$ values corresponding to the upper bound on the average BER, as given from Theorem~\ref{theo_upper}, together with those computed from the average BER obtained via density evolution after $l$ iterations. In Figs.~\ref{fig4_1} and~\ref{fig4_2}, we further compare the proposed lower bound with the simulated BER performance of a randomly constructed code generated using the Progressive Edge Growth (PEG) algorithm~\cite{hu2005regular}. The code has a length of 5400, variable node degree 3, and check node degree 4, and is evaluated under both the BEC and the AWGN channel. It can be observed that, under different channel conditions, both the simulated BER and the average BER obtained via density evolution are bounded by the lower and upper bounds. Moreover, at low iterations, both curves are closer to the proposed lower bound, and they gradually approach the upper bound as the number of iterations increases. Figs.~\ref{fig4_3} and~\ref{fig4_4} show the proposed lower bound on the average BER for the irregular LDPC code ensemble \(\text{LDPC}(N, L, R)\), where \(L(x) = 0.4286x^2 + 0.5714x^3\) and \(R(x) = 0.5x^8 + 0.5x^{10}\) correspond to the degree distributions of the 5G NR BG2 with a code rate of $5/7$. The results are compared with the simulated BER of BG2 with a code rate of $5/7$ and a lifting size of 384. At small iteration numbers, both the simulated BER and the average BER obtained via density evolution lie closer to the proposed lower bound across different channel conditions.

\begin{figure}[htbp]
\centering
\subfigure[Bounds on the average BER for the LDPC$(N,x^3,x^4)$ code ensemble over a BEC with $\epsilon = 0.6$.]{\includegraphics[width=0.4\textwidth]{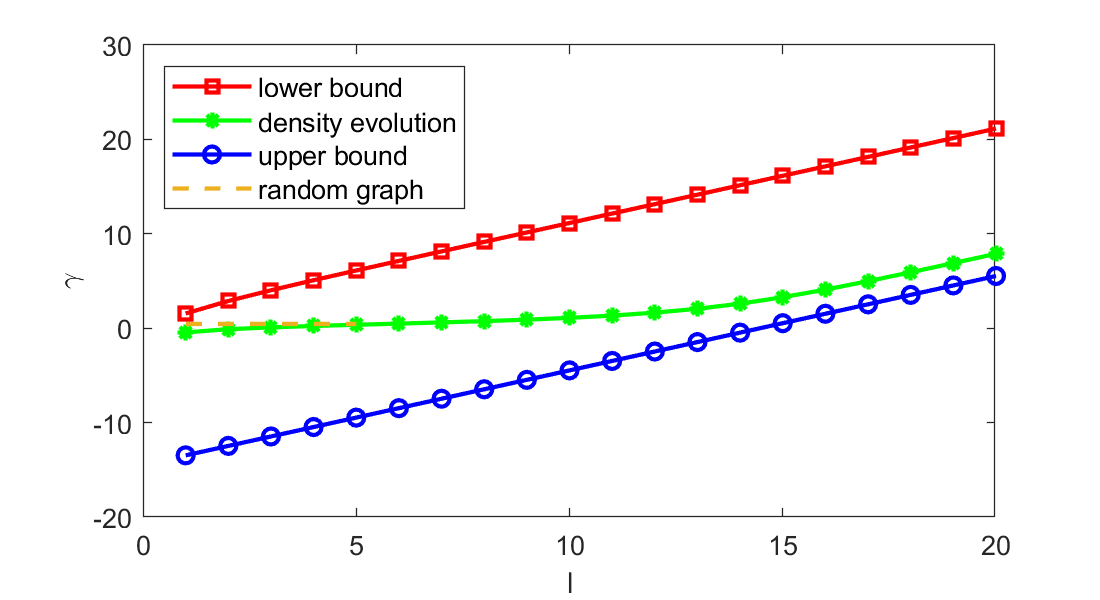}
\label{fig4_1}}
\centering
\subfigure[Bounds on the average BER for the LDPC$(N,x^3,x^4)$ code ensemble over a AWGN channel with $E_b/N_0 = -0.3$dB.]{\includegraphics[width=0.4\textwidth]{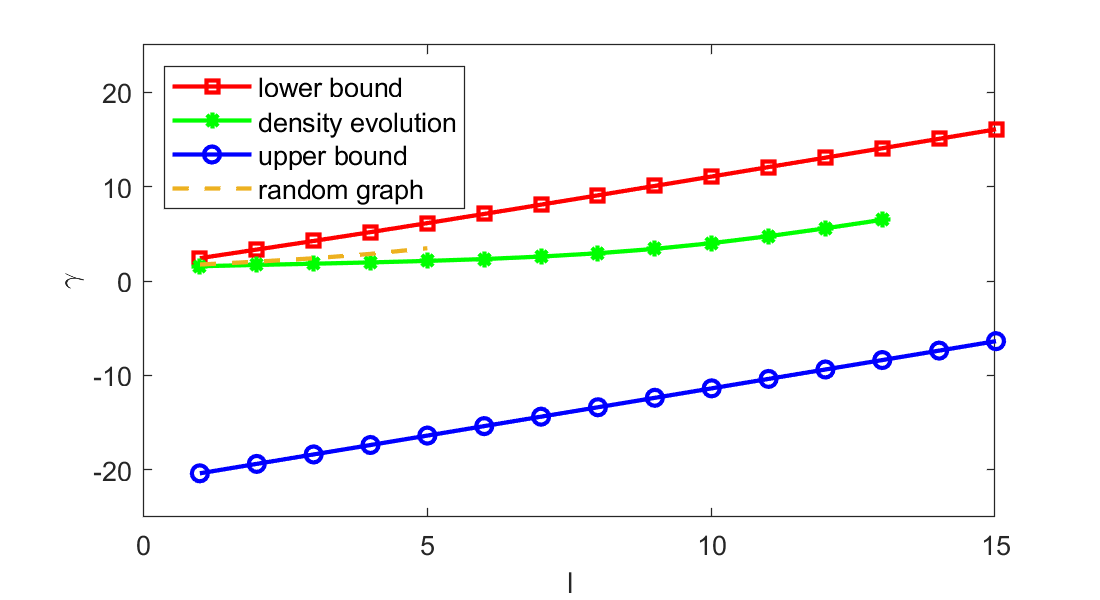}
\label{fig4_2}}
\centering
\subfigure[Bounds on the average BER for the irregular LDPC code ensemble with the degree distribution corresponding to BG2 (rate \(5/7\)) over a BEC with $\epsilon = 0.1$.]{\includegraphics[width=0.4\textwidth]{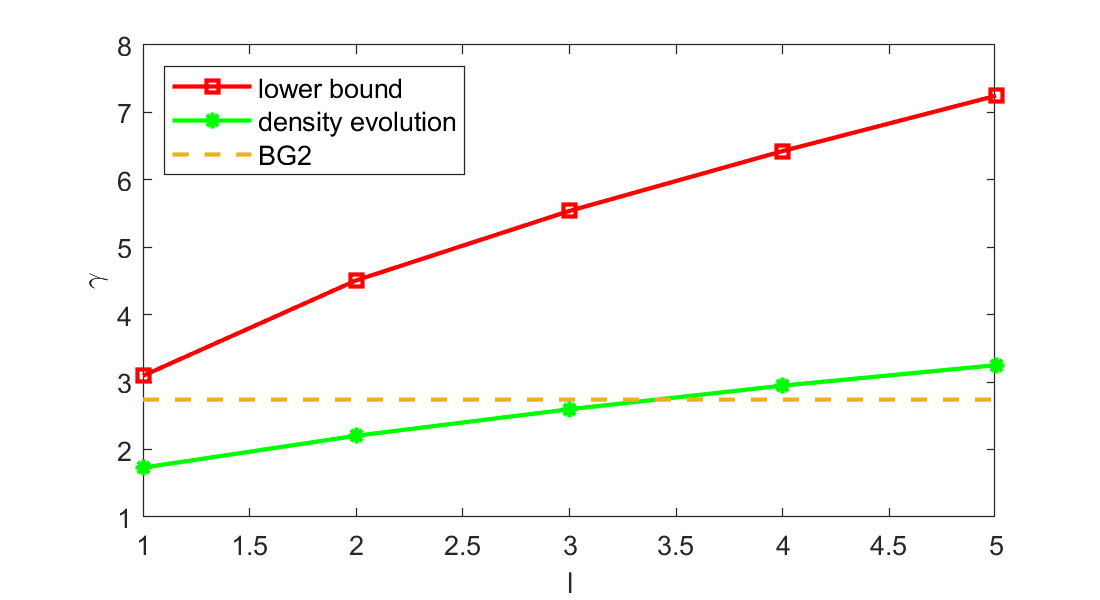}
\label{fig4_3}
}
\centering
\subfigure[Bounds on the average BER for the irregular LDPC code ensemble with the degree distribution corresponding to BG2 (rate \(5/7\)) over a AWGN channel with $E_b/N_0 = 5.2$dB.]{\includegraphics[width=0.4\textwidth]{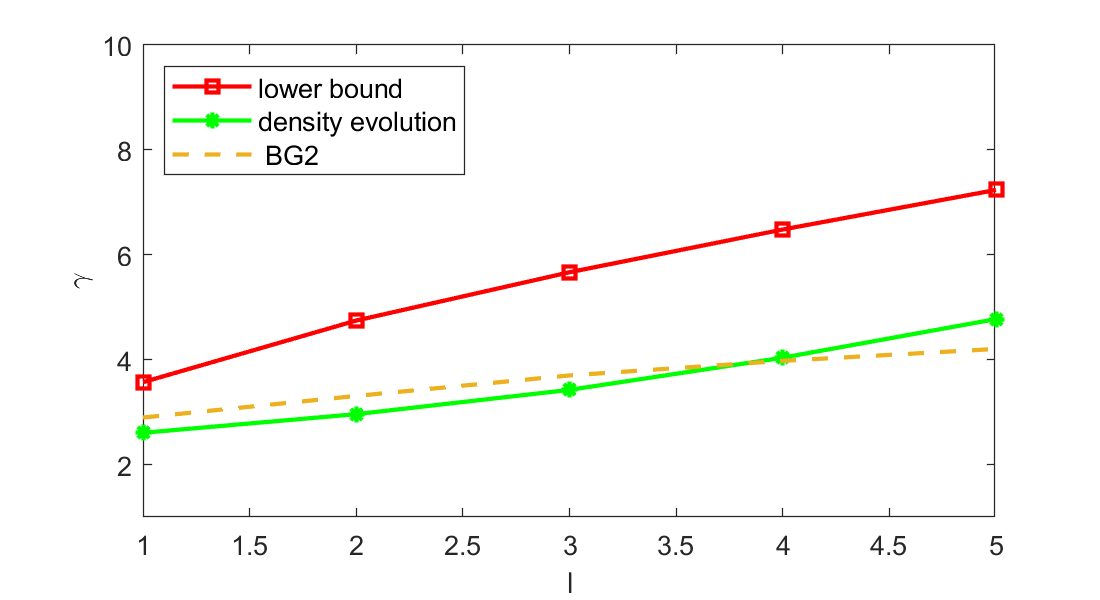}
\label{fig4_4}}
\caption{$\gamma$ vs. $J$ for the bound of LDPC code ensembles, illustrating the tightness of the proposed lower bound. }
\label{fig_4}
\end{figure}

\section{conclusion}
\label{sec4}

In this study, we establish a closed-form lower bound on the BER of LDPC codes as a function of the number of iterations $l$ under BP decoding. Specifically, for the regular LDPC code ensemble LDPC$(N,x^J,x^K)$, we show that the average BER admits a lower bound of the form 
\begin{equation}
    h2^{-a2^{bl}}
\end{equation}
for the BSC, the BEC, and the BI-AWGN channel, where $b=\log_2(J-1)$ is the main term determining the order of performance when $l \le \theta_1\log_{(J-1)^5(K-1)^3}N^2 $ with $\theta_1<1$. By comparing with the conclusions in \cite{lentmaier2005analysis}, we demonstrate that the lower bound we provided is tight in terms of order. Furthermore, for irregular LDPC code ensembles, we also propose a method, distinct from the regular case, to iteratively give the lower bound of the average BER.

These results offer insights into the fundamental decoding limits of LDPC codes and provide theoretical guidance for the design and optimization of practical coding systems. In particular, under a fixed decoding complexity, decoding strategies that enable each variable node to receive a greater number of channel messages are generally more effective. Such structures should therefore be prioritized in the design of encoding and decoding algorithms. For instance, compared to flooding BP, layered BP with various scheduling strategies enables faster propagation of the messages and has been widely adopted in practical implementations \cite{jang2022design,chang2023optimization}.

Regarding future work, we believe that the lower bounds on the BER of additional code ensembles, such as the ensembles of QC-LDPC and spatially coupled LDPC codes, can be investigated based on the number of distinct variable nodes in the computation graph. This will contribute to the development of more efficient encoding and decoding schemes under high-throughput conditions, thereby providing valuable guidance for practical applications.

\begin{appendices}

\section{PROOF OF LEMMA \ref{lemma_regular_w}}
\label{prf_BER}
Consider a Tanner graph \(G\) drawn uniformly at random from the LDPC ensemble \(\text{LDPC}(N, x^J, x^K)\), and a variable node \(v\) selected uniformly at random from \(G\). The following lemma shows that $w_{2l}(v, G)$ is, with high probability, less than or equal to $\frac{J(J-1)^l - 2}{J - 2}$, which in turn provides an upper bound on the expected value of $w_{2l}(v, G)$.

\begin{lemma}
\label{appendix_lemma}
When $l\le \log_{(J-1)(K-1)}\left(\left(1-\frac{K}{J(K-1)}\right)N\right)$, the probability $\mathring{P}$ that $w_{2l}(v, G) \le \frac{J(J-1)^l - 2}{J - 2}$ satisfies
\begin{equation}
\mathring{P} \ge \left[1-\frac{m_{2l}^*+C_{2l+2}}{M}
  \right]^{C_{2l+2}}\left[1-\frac{n_{2l}^*+V_{2l+2}}{N}
  \right]^{V_{2l+2}}.
\end{equation}
where 
\begin{equation}
  V_{2l+2}=\frac{2-J(J-1)^{l+1}}{2-J},  
\end{equation}
\begin{equation}
 C_{2l+2} = \frac{J-J(J-1)^{l+1}}{2-J}, 
\end{equation}
\begin{equation}
 n_{2l}^*= \frac{J(K-1)^{l+1}(J-1)^{l}-K}{(J-1)(K-1)-1},   
\end{equation}
\begin{equation}
 m_{2l}^*= \frac{J(K-1)^{l}(J-1)^{l}-J}{(J-1)(K-1)-1},   
\end{equation}
and $M$ and $N$ denote the number of check nodes and variable nodes in $G$, respectively.
\end{lemma}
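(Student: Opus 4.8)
The idea is to exhibit, on a ``good'' event whose probability has the claimed product form, an explicit low-weight codeword of $\mathcal{C}^{1}_{2l}(v,G)$ and then just read off its weight. The candidate is the codeword obtained by forcing $v=1$ and propagating weight outward along a sparse skeleton of the computation graph: take all $J$ check neighbours of $v$; balance each of them by setting exactly one of its $K-1$ variable children to $1$; take all $J-1$ further check neighbours of each newly-set variable; balance each of those by one more variable set to $1$; and iterate for $l$ variable-generations. Writing $\tau$ for this subtree (variable-depth $l$), and setting every variable outside $\tau$ to $0$, one gets --- provided $\tau$ is embedded in $T_{2l}(v,G)$ as a genuine tree none of whose nodes occurs anywhere else in $T_{2l}(v,G)$ --- a valid codeword: a check of $\tau$ sees its parent and its chosen child (two ones) and no other one, while every check of $T_{2l}(v,G)$ outside $\tau$ sees only zeros. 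Its Hamming weight equals the number of variable nodes of $\tau$, which telescopes to $1+J+J(J-1)+\cdots+J(J-1)^{l-1}=\frac{J(J-1)^{l}-2}{J-2}$, so on this event $w_{2l}(v,G)\le\frac{J(J-1)^{l}-2}{J-2}$.

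It therefore remains to lower bound the probability that ``$\tau$ sits inside $T_{2l}(v,G)$ as an isolated genuine subtree.'' I would estimate this by revealing the random perfect matching that defines $G$ one matched pair at a time, breadth-first from $v$. First expose the bulk of the height-$2l$ computation graph $T_{2l}(v,G)$: whatever cycles occur, this touches at most $m_{2l}^{*}$ distinct check nodes and at most $n_{2l}^{*}$ distinct variable nodes, and the hypothesis $l\le\log_{(J-1)(K-1)}\big((1-\tfrac{K}{J(K-1)})N\big)$ is precisely the inequality that ensures $n_{2l}^{*}\le N$ and $m_{2l}^{*}\le M$, so the exploration never exhausts a side. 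Then expose the additional cells that certify $\tau$ is isolated, namely the further $C_{2l+2}$ check-slots and $V_{2l+2}$ variable-slots obtained by growing $\tau$ one more variable-generation (to $\tau'$ of variable-depth $l+1$); the construction is valid only if each such newly matched node is fresh. By the standard configuration-model computation, when a check-slot is matched at most $m_{2l}^{*}+C_{2l+2}$ check nodes out of $M$ have been used, so it is fresh with probability at least $1-\frac{m_{2l}^{*}+C_{2l+2}}{M}$; symmetrically each variable-slot is fresh with probability at least $1-\frac{n_{2l}^{*}+V_{2l+2}}{N}$. Since the unexposed part of the matching remains uniform given the exposed part, these bounds multiply over the $C_{2l+2}$ check-slots and $V_{2l+2}$ variable-slots, giving $\mathring{P}\ge[1-\frac{m_{2l}^{*}+C_{2l+2}}{M}]^{C_{2l+2}}[1-\frac{n_{2l}^{*}+V_{2l+2}}{N}]^{V_{2l+2}}$; the closed forms for $V_{2l+2},C_{2l+2},n_{2l}^{*},m_{2l}^{*}$ follow from the same telescoping geometric sums.

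I expect the combinatorial half, not the probability estimate, to be the main obstacle. One has to make precise exactly which ``one more generation'' of cells must be exposed and demanded fresh so that the event genuinely certifies that \emph{no} parity check of the full graph $T_{2l}(v,G)$ is violated by the all-ones-on-$\tau$/all-zeros-elsewhere assignment --- ruling out, for instance, that a leaf variable of $\tau$ (at level $2l$), through one of its edges not used by $\tau$, is adjacent to a check of $\tau$ (which would make that check see three ones), and more generally that no skeleton variable re-occurs elsewhere in $T_{2l}(v,G)$ --- and to organise the book-keeping of used cells so that the exponents come out as the skeleton sizes $C_{2l+2},V_{2l+2}$ rather than the much larger total sizes $m_{2l}^{*},n_{2l}^{*}$ of $T_{2l}(v,G)$ itself. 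Getting the order of revelation right --- bulk of $T_{2l}(v,G)$ first, then the skeleton-extension cells, demanding freshness only on the latter --- is what produces the denominators $m_{2l}^{*}+C_{2l+2}$ and $n_{2l}^{*}+V_{2l+2}$, and is the technical heart of the appendix.
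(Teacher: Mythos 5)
Your combinatorial object is the right one: your skeleton $\tau$ (the root plus, at each generation, all check neighbours of every skeleton variable, each such check receiving exactly one skeleton child) is exactly the paper's ``valid tree'', the candidate codeword is the indicator of its variable nodes, and its weight telescopes to $\frac{J(J-1)^l-2}{J-2}$ as you say. The gap is in the probability estimate. You propose to expose the whole height-$2l$ computation graph first and only then demand freshness of the cells obtained by growing $\tau$ one further variable-generation. But the event you actually need --- that $\tau$ sits inside $T_{2l}(v,G)$ as an isolated tree: no skeleton check coinciding with another, no skeleton variable recurring elsewhere in $T_{2l}(v,G)$, no leaf of $\tau$ wired back into a skeleton check --- is already determined by that first exposure; the freshness of cells revealed afterwards cannot certify it, so your two-phase scheme bounds the probability of a different (and insufficient) event. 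The same mismatch shows in the exponents: growing $\tau$ by one generation adds only $C_{2l+2}-C_{2l}$ check-slots and $V_{2l+2}-V_{2l}$ variable-slots, whereas the exponents $C_{2l+2}$ and $V_{2l+2}$ in the lemma count \emph{all} skeleton slots, i.e.\ freshness must be demanded of every skeleton edge at every generation, not just the last one.

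The paper's proof resolves precisely this bookkeeping by interleaving the two exposures level by level. It introduces the event $\mathcal{A}_{2t}$ that $T_{2t}(v,G)$ contains a valid tree, and bounds $\mathbb{P}(\mathcal{A}_{2t+1}\mid\mathcal{A}_{2t})$ (respectively $\mathbb{P}(\mathcal{A}_{2t+2}\mid\mathcal{A}_{2t+1})$) by conditioning on the valid tree together with the computation graph of height $2t-1$ (respectively $2t$), then sequentially revealing the $J-1$ remaining edges of each skeleton variable at level $2t$ (respectively one child edge of each skeleton check at level $2t+1$), requiring each revealed edge to hit a node not yet present and to create no cycle; each such edge succeeds with probability at least $1-\frac{E_c+m_{2t}}{M}$ (respectively $1-\frac{E_v+n_{2t}}{N}$), where $m_{2t}\le m^*_{2l}$ and $n_{2t}\le n^*_{2l}$. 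Multiplying over all levels, the constrained edges accumulate to the exponents $C_{2l+2}$ and $V_{2l+2}$, giving the stated bound; the denominators $m^*_{2l}+C_{2l+2}$ and $n^*_{2l}+V_{2l+2}$ arise from freshness relative to the \emph{partially} revealed graph at each stage plus the previously revealed skeleton cells, not from exposing all of $T_{2l}(v,G)$ up front. Keeping your skeleton construction and replacing your revelation order by this level-by-level conditioning closes the gap; as written, the step you yourself flag as the ``technical heart'' is the one that fails.
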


For the purpose of proving Lemma \ref{appendix_lemma}, we first introduce some key definitions. We adopt a more suitable representation of the computation graph in which each distinct node appears only once \cite[Chapter 3.7.1]{richardson2008modern}. Specifically, with a slight abuse of notation, a computation graph $T_{k}(v,G)$ of height $k \ge 0$, rooted at a variable node $v$, is defined as the subgraph of the Tanner graph $G$ induced by the nodes at a distance less than $k$ from $v$ together with all edges among these nodes. A node is said to be at level $k^
\prime$ in the computation graph $T_{k}(v,G)$ if its distance from the root node $v$ is exactly $k^
\prime$. As illustrated in Fig.~\ref{fig_bec_gap}, the computation graph may either be a tree or contain cycles.

\begin{figure}[!t]
\centering
\subfigure[The computation graph is tree-like.]{
\includegraphics[width=3in]{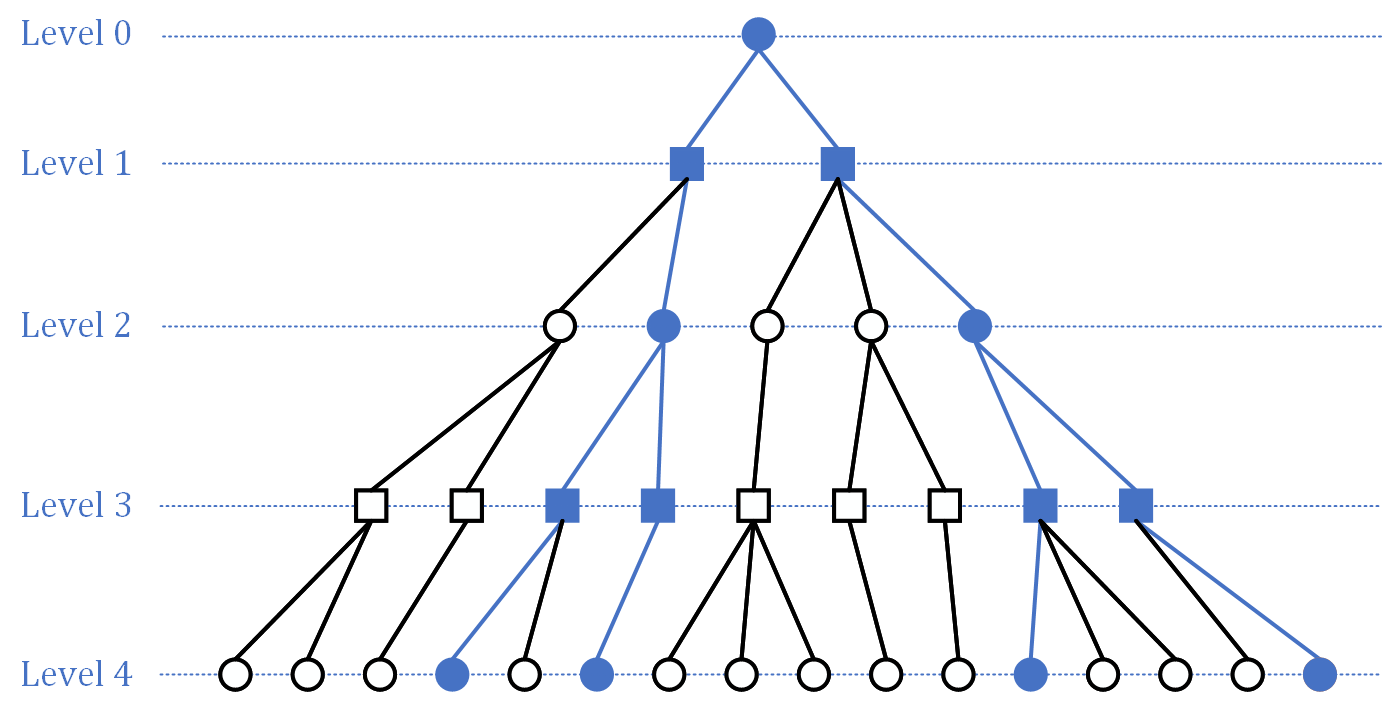} 
\label{min1}
}
\subfigure[The computation graph contains cycles. ]{
\includegraphics[width=3in]{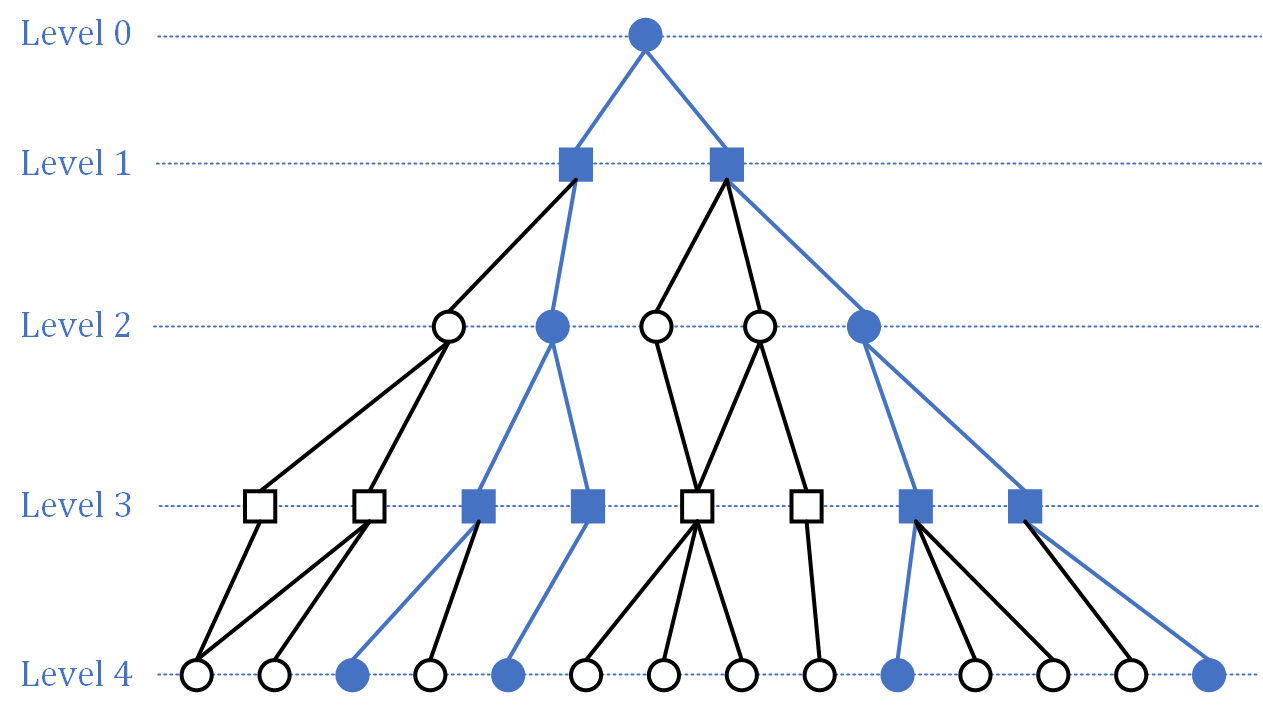} 
\label{min2}
}
\DeclareGraphicsExtensions.
\caption{The computation graphs and the valid trees (in solid blue) contained within them. Here, the computation graph differs from the one in the body of the paper, as repeated nodes appear only once.}
\label{fig_bec_gap}
\end{figure}


\begin{definition}
Given a computation graph $T_{k}(v,G)$ of height $k\ge 0$. A valid tree of $T_{k}(v,G)$, denoted by $VT_k(v,G)$, is a connected tree-like subgraph of height $k$ that satisfies the following conditions:
\begin{itemize}
    \item $v$ is contained in \(VT_k(v, G)\).
    \item For any variable node \(v'\) located at level $k_{v'}<k$ in \(T_k(v, G)\), if $v'$ is contained in \(VT_k(v, G)\), then all of its neighboring nodes are also contained in \(VT_k(v, G)\) and $v'$ has at most one neighbor situated at level $k_{v'} -1$ in $T_{k}(v,G)$.
    \item For any check node \( c \) located at level \( k_c<k \) in \( T_k(v, G) \), if \( c \) is contained in \( VT_k(v, G) \), then it has exactly two neighbors in \( VT_k(v, G) \), one situated at level \( k_c - 1 \) and the other at level \( k_c + 1 \) of \( T_k(v, G) \).
    \item For any node \( u \) located at level \( k \) in \( T_k(v, G) \), if \( u \) is contained in \( VT_k(v, G) \), then it has exactly one neighbor in \( VT_k(v, G) \) located at level \( k - 1 \).
\end{itemize}
\end{definition}

Fig.~\ref{fig_bec_gap} illustrates a valid tree, highlighted in solid blue. By definition, a computation graph may contain multiple valid trees. However, we are only concerned with whether the computation graph contains at least one valid tree and with the number of nodes within the valid tree, which remains invariant. Specifically, the valid tree $VT_k(v,G)$ is a tree with height $k$ in which every check node at levels less than $k$ has degree 2, and every variable node at levels less than $k$ has degree $J$. It follows that $VT_k(v,G)$, contains a total of 
\begin{equation}
  V_{k}=\frac{2-J(J-1)^{\lfloor k/2\rfloor}}{2-J}  
\end{equation}
variable nodes and 
\begin{equation}
C_{k} = \frac{J-J(J-1)^{\lfloor (k+1)/2\rfloor}}{2-J}   
\end{equation}
check nodes. 

Let $n_{k}$ and $m_{k}$ denote the total number of distinct variable nodes and check nodes, respectively, in the computation graph $T_{k}(v,G)$. Then, we have
\begin{equation}
 n_{k} \le n_{k}^*\triangleq \frac{J(K-1)^{{\lfloor k/2\rfloor}+1}(J-1)^{\lfloor k/2\rfloor}-K}{(J-1)(K-1)-1},   
\end{equation}
\begin{equation}
  m_{k} \le m_{k}^*\triangleq \frac{J(K-1)^{\lfloor (k+1)/2\rfloor}(J-1)^{\lfloor (k+1)/2\rfloor}-J}{(J-1)(K-1)-1}.  
\end{equation}
We provide a brief proof of the above equations. Since the degree of $v$ is $J$, the number of nodes at the first level of the computation graph is $J$. Similarly, for each check node at the first level, there are at most $K-1$ distinct neighboring variable nodes at level two. Therefore, the total number of distinct variable nodes at level two is bounded above by $J(K-1)$. Following this pattern, at the $k^\prime$-th level, there are at most $J(J-1)^{{{\lfloor k^\prime/2\rfloor}}-1}(K-1)^{{\lfloor k^\prime/2\rfloor}}$ distinct variable nodes. This implies that $n_{k} \le n_{k}^*$ and $m_{k} \le m_{k}^*$.

We now return to the proof of Lemma \ref{appendix_lemma}.
\begin{proof}[Proof of Lemma \ref{appendix_lemma}]
Assume that the computation graph $T_{2t}(v,G)$ contains a valid tree $VT_{2t}(v,G)$, and denote this event by $\mathcal{A}_{2t}$, where $t\ge 0$ is an integer constant. We now estimate the conditional probability $\mathbb{P}(\mathcal{A}_{2t+1} \mid \mathcal{A}_{2t})$, that is, the probability that the computation graph $T_{2t+1}(v,G)$ contains a valid tree $VT_{2t+1}(v,G)$, given that $\mathcal{A}_{2t}$ has occurred. This conditional probability $\mathbb{P}(\mathcal{A}_{2t+1} \mid \mathcal{A}_{2t})$ is greater than the probability that the following two conditions are simultaneously satisfied: (i) For any variable node at level $2t$ in $T_{2t}(v,G)$, if it is contained in $VT_{2t}(v,G)$, then its degree in $T_{2t}(v,G)$ is one, and (ii) the subgraph of $T_{2t+1}(v, G)$ induced by all nodes in $VT_{2t}(v,G)$ together with the check node at level $2t+1$ remains cycle-free. To lower-bound the probability $\mathbb{P}(\mathcal{A}_{2t+1} \mid \mathcal{A}_{2t})$, we assume that the tree $VT_{2t}(v,G)$ and the computation graph $T_{2t-1}(v,G)$ are already fixed. We then sequentially reveal the remaining $J-1$ edges of each variable node at level $2t$ in $VT_{2t}(v,G)$. Suppose that $E_c < C_{2t+2} - C_{2t}$ such edges have already been revealed, with each edge connected to a check node that is not yet present in $T_{2t-1}(v,G)$ and does not form a cycle. Then, the probability that the $(E_c+1)$-th edge also connects to a new check node and does not create a cycle is lower bounded by
\begin{equation}
\begin{split}
 &\frac{MK-m_{2t}K-E_cK}{MK}=1-\frac{E_c+m_{2t}}{M}.
\end{split}    
\end{equation}
As a result, the probability $\mathbb{P}(\mathcal{A}_{2t+1} \mid \mathcal{A}_{2t})$ is lower bounded by
\begin{equation}
 \begin{split}
 &\prod_{E_c=0}^{C_{2t+2}-C_{2t}-1}1-\frac{E_c+m_{2t}}{M} \\
  &\ge \left[1-\frac{C_{2t+2}-C_{2t}-1+m_{2t}}{M}
  \right]^{C_{2t+2}-C_{2t}} \\
  &\ge \left[1-\frac{C_{2t+2}+m^*_{2t}}{M}
  \right]^{C_{2t+2}-C_{2t}}.
\end{split}   
\end{equation}

Second, We next estimate the probability $\mathbb{P}(\mathcal{A}_{2t+2} \mid \mathcal{A}_{2t+1})$. This conditional probability equals the probability that the following two conditions are simultaneously satisfied: (i) for any check node at level $2t+1$ in $T_{2t+1}(v,G)$, if it is contained in $VT_{2t+1}(v,G)$, then its degree at most $K-1$; and (ii) the subgraph of $T_{2t+2}(v,G)$ induced by all nodes in $VT_{2t+1}(v,G)$ together with the variable node at level $2t+2$ remains cycle-free. Similar to the previous case, we assume that the tree $VT_{2t+1}(v,G)$ and the computation graph $T_{2t}(v,G)$ are already fixed. We then sequentially reveal one edge for each check node at level $2t+1$ in $VT_{2t+1}(v,G)$, and bound the probability that each newly revealed edge connects to a variable node not yet present in $T_{2t}(v,G)$ and does not create a cycle. As a result, the probability $\mathbb{P}(\mathcal{A}_{2t+2} \mid \mathcal{A}_{2t+1})$ can be lower bounded by
\begin{equation}
     \begin{split}
  &\prod_{E_v=0}^{V_{2t+2}-V_{2t}-1}\frac{NJ-n_{2t}J-E_vJ}{NJ}\\
 &\ge \prod_{E_v=0}^{V_{2t+2}-V_{2t}-1} 1-\frac{n_{2t}+E_v}{N} \\
  &\ge \left[1-\frac{n_{2t}^*+V_{2t+2}}{N}
  \right]^{V_{2t+2}-V_{2t}},
\end{split}
\end{equation}

In summary, the probability that the computation graph $T_{2l+1}(v,G)$ contains a valid tree rooted at $v$ of height $2l+1$ is lower bounded by
\begin{equation}
    \begin{split}
&\prod_{t=0}^{l}\mathbb{P}(\mathcal{A}_{2t+1}\big|\mathcal{A}_{2t})\mathbb{P}(\mathcal{A}_{2t+2}\big|\mathcal{A}_{2t+1}) \\
    &=\prod_{t=0}^{l} \left[1-\frac{m_{2l}^*+C_{2t+2}}{M}
  \right]^{C_{2t+2}-C_{2t}} \left[1-\frac{n_{2t}^*+V_{2t+2}}{N}
  \right]^{V_{2t+2}-V_{2t}}\\
  &\ge\left[1-\frac{m_{2l}^*+C_{2l+2}}{M}
  \right]^{C_{2l+2}}\left[1-\frac{n_{2l}^*+V_{2l+2}}{N}
  \right]^{V_{2l+2}}.\\
\end{split}
\end{equation}

When the computation graph $T_{2l+1}(v,G)$ contains a valid tree $VT_{2l+1}(v,G)$, a codeword in $\mathcal{C}_{2l}^{1}(v,G)$ can be obtained by assigning the value 1 to all variable nodes in the valid tree and 0 to all remaining variable nodes. The resulting codewords has Hamming weight of exactly \(\frac{J(J-1)^l - 2}{J - 2}.\) Consequently, an upper bound on $w_{2l}^1(v,G)$ is given by $w_{2l}^1(v,G) \le \frac{J(J-1)^l - 2}{J - 2}$. 
\end{proof}

We now return to the proof of Lemma \ref{lemma_regular_w}.

\begin{proof}[Proof of Lemma \ref{lemma_regular_w}]

When the computation graph $T_{2l+1}(v,G)$ does not contain a valid tree, the minimum Hamming weight $w_{2l}^1(v,G)$ is upper bounded by the number of variable nodes in $T_{2l}(v,G)$. That is,
\begin{equation}
 w_{2l}^1(v,G)\le n_{2l} \le n_{2l}^*. 
\end{equation}

According to Lemma \ref{appendix_lemma}, when $l\le \log_{(J-1)(K-1)}\left(\left(1-\frac{K}{J(K-1)}\right)N\right)$, the expected minimum Hamming weight of $\mathcal{C}_{{2l}}^1(v,G)$ is less than 
\begin{equation}
 \mathring{P}\frac{J(J-1)^l - 2}{J - 2} + (1-\mathring{P})*n_{2l}^*.   
\end{equation}

This implies that when $l \le \theta_1 \log_{(J-1)^5(K-1)^3} N^2$ with $\theta_1 < 1$ and $N$ is sufficiently large, we have the expected value of $w_{2l}^1(v,G)$ approaches \ $\frac{J(J - 1)^l-2}{J - 2}$.
\end{proof}

\section{PROOF OF LEMMA \ref{lemma_irregular_w} }
\label{appendix_irregular_w}
Consider a Tanner graph \(G\) drawn uniformly at random from the LDPC ensemble \(\text{LDPC}(N, L, x^2)\), and let variable nodes \(v_i\) and $v_j$ be chosen independently and uniformly at random from \(G\). Let \(n_{2l}(v_i,G)\) denote the number of distinct variable nodes in \(T_{2l}(v_i,G)\), and let \(\overline{n}_{2l}(N,L,x^2)\) denote the ensemble average of \(n_{2l}(v_i,G)\).

According to Lemma~\ref{lemma_regular_w}, when \(
l \le \theta_1\log_{(J_{\max}-1)^5(K_{\max}-1)^3} N^2,\theta_1<1,\) the expected minimum Hamming weight $\overline{w}_{2l}^1(N,L,R)$ is upper-bounded by $\overline{n}_{2l}(N,L,x^2)$. It suffices to provide an upper bound on $\overline{n}_{2l}(N,L,x^2)$. Recall that $dist(v_i,v_j)$ denotes the graph distance between $v_i$ and $v_j$. Then $n_{2l}(v_i,G)$ equals the number of variable nodes at distance less than $2l$ from $v_i$, and thus

 \begin{equation}
 \begin{split}
 \label{e1}
 &\overline{n}_{2l}(N,L,x^2) = N\Big(1 - \mathbb{P}(dist(v_i,v_j) > 2l)\Big) \\
 & = N\left(1 - \mathbb{P}\left(dist(v_i,v_j)>0\right)\prod_{t^\prime=1}^{2l} \mathbb{P}\left(dist(v_i,v_j) >t^\prime|dist(v_i,v_j)>t^\prime-1 \right)\right)\\
 & \overset{(e5)}{=}  N\Bigg(1 -\mathbb{P}\left(dist(v_i,v_j)>0\right)\prod_{t=1}^{l} \mathbb{P}\left(dist(v_i,v_j) >2t|dist(v_i,v_j)>2t-1 \right)\Bigg)  .
 \end{split}
 \end{equation}
 where $\mathbb{P}(dist(v_i,v_j) > 2t|dist(v_i,v_j) > 2t-1)$ denotes the conditional probability that the distance $dist(v_i,v_j)$ is greater than $2t$ given that it is greater than $2t-1$, and is hereafter abbreviated as $P_{2t}$. The validity of $(e5)$ follows from the fact that, in a bipartite graph, the path length between any two variable nodes is always even. Therefore, $\mathbb{P}(dist(v_i,v_j) > 2t + 1 \mid dist(v_i,v_j) > 2t) = 1$ for $t = 0, \ldots, l-1$.
 
 In the following, we adopt a recursive approach to compute $P_{2t}$, inspired by the method introduced in \cite{nitzan2016distance}, which analyzes the distribution of shortest path lengths in random graphs. 
\subsection{some recursive expressions in a Tanner graph}
\label{sub1}

In this Subsection, we introduce the definition of the mean conditional indicator function and demonstrate that it can be computed via a sequence of recursive expressions. These recursive properties play a central role in the computation of $P_{2t}$.

\begin{definition}
The definitions of the indicator functions are as follows:
\begin{equation}
    \mathcal{X}(dist(v_i,v_j) > 2t) = \left\{ 
\begin{array}{lr}
1 & dist(v_i,v_j) >2t, \\
0 & dist(v_i,v_j) \le 2t.
\end{array}
\right.
\end{equation}

The definitions of the conditional indicator functions are as follows:
\begin{equation}
\begin{split}
     &\mathcal{X}(dist(v_i,v_j) > 2t|dist(v_i,v_j) > 2t - 1)= \frac{\mathcal{X}(dist(v_i,v_j) > 2t) \cap (dist(v_i,v_j)>2t-1)}{\mathcal{X}(dist(v_i,v_j) > 2t-1)}.
\end{split}
\end{equation}
\end{definition}

Note that, $\mathcal{X}(dist(v_i,v_j) > 2t|dist(v_i,v_j) > 2t - 1)$  indicates whether $dist(v_i,v_j)$ is greater than $2t$, given that $dist(v_i,v_j)$ is greater than $2t-1$. If this is true, $\mathcal{X}(dist(v_i,v_j) > 2t|dist(v_i,v_j) > 2t - 1)$ takes a value of $1$; otherwise, it assumes a value of $0$. In case where the condition $dist(v_i,v_j) > 2t-1$ is not satisfied, the value of the conditional indicator function is undetermined. Though our primary focus is on the conditional indicator functions between pairs of variable nodes, the same definitions apply equally to pairs of check nodes as well as to variable node and check node pairs.

\begin{definition}
If we take the average of the conditional indicator function $\mathcal{X}(dist(v_i,v_j) > 2t|dist(v_i,v_j) > 2t - 1)$ over all suitable choices of $v_j$, we obtain the mean conditional indicator function $I_{v_i}(2t)$:
\begin{equation}
    I_{v_i}(2t) = \mathbb{E}_{v_j}\left[\mathcal{X}(dist(v_i,v_j) > 2t|dist(v_i,v_j) > 2t - 1)\right].
\end{equation}
The averaging is done only over nodes $v_j$ for which the condition $dist(v_i,v_j) > 2t-1$ is satisfied.
\end{definition}


To establish the recursive expression for the mean conditional indicator function, we first examine the structural properties of paths in the Tanner graph. As illustrated in Fig. \ref{fig5}, a path of length $2t$ from a node $v_i$ to a node $v_j$ can be decomposed into an edge from $v_i$ to $c_r \in \mathcal{N}(v_i)$, and a path of length $2t-1$ from $c_r$ to $v_j$. In other words, if there is no path of length $2t$ between node $v_i$ and node $v_j$, it implies that any neighbor $c_r$ of node $v_i$ does not have a path of length $2t-1$ to reach node $v_j$. The graph obtained by removing vertex $v_i$ from graph $G$ is referred to as the cavity graph of $G$, denoted as $\widetilde{G}$. Note that those paths from $c_r$ to $v_j$ should be embedded in $\widetilde{G}$, meaning that these paths should not pass through $v_i$ again. We refer to $\mathcal{X}^{(v_i)}(dist(c_r,v_j)>2t-1|dist(c_r,v_j)>2t-2)$ as the cavity indicator function, which indicates whether $dist(c_r,v_j)$ is greater than $2t-1$, given that $dist(c_r,v_j) > 2t-2$. The superscript $(v_i)$ stands for the fact that the node $c_r$ is reached by a link from node $v_i$. 

Drawing upon the recursive property of paths, it can be deduced that the conditional indicator function $\mathcal{X}(dist(v_i,v_j) > 2t|dist(v_i,v_j) > 2t - 1)$ can be denoted as the product of cavity indicator functions between nodes $c_r \in \mathcal{N}(v_i)$ and $v_j$:
\begin{equation}
\label{eq6_1}
\begin{split}
    &\mathcal{X}(dist(v_i,v_j) > 2t|dist(v_i,v_j) > 2t - 1)= \prod_{c_r \in \mathcal{N}(v_i)} \mathcal{X}^{(v_i)}(dist(c_r,v_j)>2t-1|dist(c_r,v_j)>2t-2).
\end{split}
\end{equation}
Based on the definition of the mean conditional indicator function $I_{v_i}(2t)$ and (\ref{eq6_1}), we can derive the recursive expression for the mean conditional indicator function. That is,
\begin{equation}
\label{eq8}
\begin{split}
&I_{v_i}(2t) \\
&= \mathbb{E}_{v_j}\left[\prod_{c_r \in \mathcal{N}(v_i)} \mathcal{X}^{(v_i)}(dist(c_r,v_j)>2t-1\mid dist(c_r,v_j)>2t-2)\right]\\
&\overset{(e6)}{=}\prod_{c_r \in \mathcal{N}(v_i)} \mathbb{E}_{v_j}\left[ \mathcal{X}^{(v_i)}(dist(c_r,v_j)>2t-1\mid dist(c_r,v_j)>2t-2)\right]\\
&=\prod_{c_r \in \mathcal{N}(v_i)} I_{c_r}^{(v_i)}(2t-1)
\end{split}
\end{equation}
where the cavity mean indicator function $I_{c_r}^{(v_i)}(2t-1)$ denotes the average of the cavity indicator functions $\mathcal{X}^{(v_i)}(dist(c_r,v_j)>2t-1|dist(c_r,v_j)>2t-2)$ over all suitable choices of $v_j$. The validity of equation $(e6)$ relies on the assumption that the local structure of $G$ is tree-like and $N$ is sufficiently large.

Moreover, $I_{c_r}^{(v_i)}(2t-1)$ follows a similar recursive structure, allowing the computation to proceed recursively.
\begin{equation}
\label{eq9}
    I_{c_r}^{(v_i)}(2t-1) = \prod_{v_{i^\prime} \in \mathcal{N}(c_r) \backslash \{v_i\} }I_{v_{i^\prime}}^{(c_r)}(2t-2).
\end{equation}

\begin{figure*}[!t]
\centering
\includegraphics[width=6in]{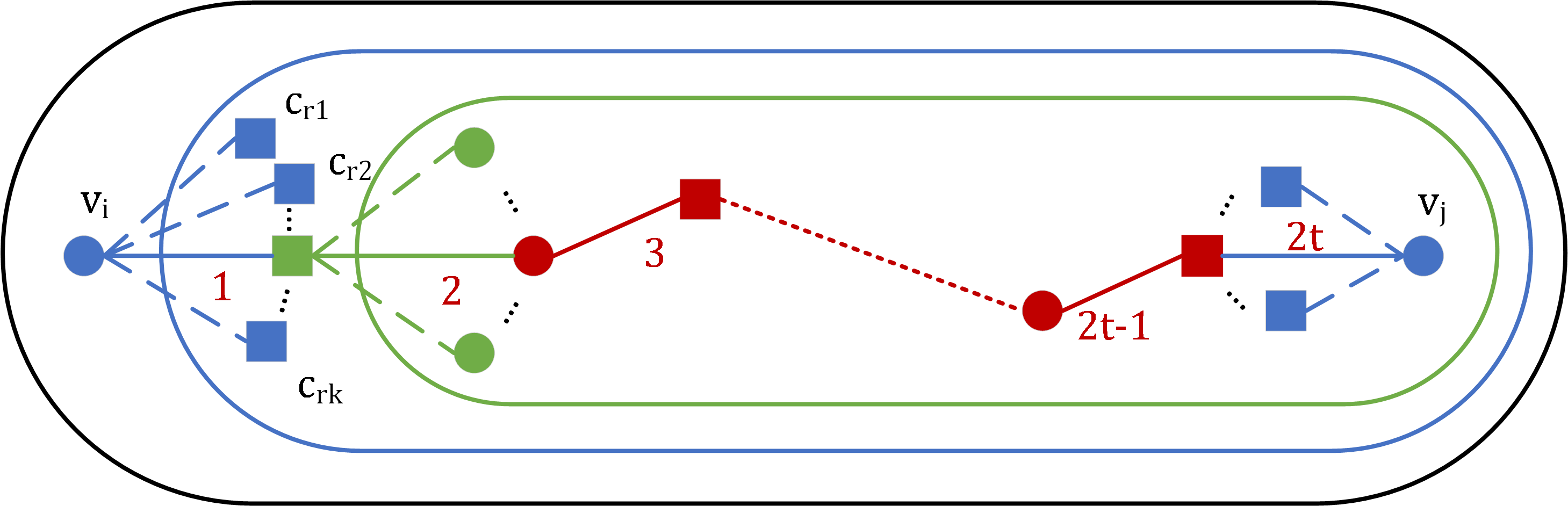}
\caption{Illustration of the possible paths of length $2t$ between two random variable nodes, $v_i$ and $v_j$, in a Tanner graph $G$. The first edge of such a path connects node $v_i$ to some other node, $c_r$, which may be any one of the $d$ neighbors of node $v_i$. The rest of the path, from node $c_r$ to node $v_j$ is of length $2t-1$ and it resides on the cavity graph of graph $G$.}
\label{fig5}
\end{figure*}

\subsection{recursive expression for $P_{2t}$}
\label{sub2}

With a probability of $L_d$, node $v_i$ has a degree of $d$. If $v_i$ is connected to a check node $c_r$ by an edge, and $c_r$ is directly connected to another variable node $v_{i^\prime}$ (with $v_{i^\prime} \neq v_i$), then the probability that $v_{i^\prime}$ has degree $d$ is given by $\lambda_d = \frac{dL_d}{\sum_{d=1}^\infty dL_d}$.

Let $\mathbb{P}\big(I_{v_i}(2t) = I\big)$ denote the probability that the mean indicator function associated with variable node $v_i$ takes on the value $I$, $\widetilde{\mathbb{P}}\big(I_{c_r}^{(v_i)}(2t-1)=\widetilde{I}\big)$ denote the probability that the cavity mean indicator function associated with check node $c_r$ takes on the value $\widetilde{I}$. Based on (\ref{eq8}), we can derive:
\begin{equation}
\label{eq7}
\begin{split}
    &\mathbb{P}\Big(I_{v_i}(2t) = I\Big)= \sum_{d=1}^{\infty}L_d \int_0^1 \int_0^1 \ldots \int_0^1 \prod_{r=1}^d \widetilde{\mathbb{P}}\Big(I_{c_r}^{(v_i)}(2t-1) = I_r\Big) \delta\Big(I-\prod_{r=1}^d I_r\Big)\mathrm{d}I_1 \mathrm{d}I_2 \ldots \mathrm{d}I_d,
\end{split}
\end{equation}
where
\begin{equation}
    \delta\Big(I-\prod_{r=1}^d I_r\Big) = \left\{
\begin{array}{lr}
1 & \text{if }I = \prod_{r=1}^{d}I_r, \\
0 & \text{if } I \neq \prod_{r=1}^{d}I_r.
\end{array}
\right.
\end{equation}

When the degree of node $v_i$ is $d$, $I_{v_i}(2t)$ is equal to $I$ if and only if the product of cavity mean indicator functions corresponding to the $d$ neighbors of node $v_i$ equals $I$. In other words, $\mathbb{P}\left(I_{v_i}(2t) = I\right)=\widetilde{\mathbb{P}}(\prod_{r=1}^d I_{c_r}^{(v_i)}(2t-1) = I) $. According to the local tree-like structure of $G$, $\widetilde{\mathbb{P}}(\prod_{r=1}^d I_{c_r}^{(v_i)}(2t-1) = I)$ can be expressed by the integral term in (\ref{eq7}). And the $\delta$ function constrains the integration domain to the region where $I=\prod_{r=1}^d I_{c_r}^{(v_i)}(2t-1)$. The validity of (\ref{eq7}) relies on the probability of a node $v_i$ having a degree $d$ being $L_d$.

Similarly, based on (\ref{eq9}), the following expression holds:
\begin{equation}
\label{eq10}
\begin{split}
    &\widetilde{\mathbb{P}}\Big(I_{c_r}^{(v_i)}(2t-1) = I\Big)=\widetilde{\mathbb{P}}\Big(I_{v_{i^\prime}}^{(c_r)}(2t-2) = I\Big),
\end{split}
\end{equation}
and

\begin{equation}
\label{eq11}
\begin{split}
    &\widetilde{\mathbb{P}}\Big(I_{v_{i^\prime}}^{(c_r)}(2t-2) = I\Big)= \sum_{d=1}^{\infty}  \lambda_d \int_0^1 \ldots \int_0^1 \prod_{r^\prime=1}^{d-1} \widetilde{\mathbb{P}}\Big(I_{c_{r^\prime}}^{(v_{i^\prime})}(2t-3) = I_{r^\prime}\Big)\delta\Big(I-\prod_{r^\prime=1}^{d-1} I_{r^\prime}\Big)\mathrm{d}I_1 \ldots \mathrm{d}I_{d-1}.
\end{split}
\end{equation}
Since the degree of the check node is 2, it follows that $I_{c_r}^{(v_i)}(2t-1) = I_{v_{i^\prime}}^{(c_r)}(2t-2)$, which leads to (\ref{eq10}). (\ref{eq11}) calculates the probability of the cavity mean indicator function $I_{v_{i^\prime}}^{(c_r)}(2t-2)$ taking the value $I$. For node $v_{i^\prime} \in \mathcal{N}(c_r)$, the probability of its degree being equal to $d$ is denoted as $\lambda_d$, and as an intermediate node, one of its edges is consumed by the incoming link, leaving only $d-1$ links for the outgoing paths. Therefore, this equation slightly differs from (\ref{eq7}).
\begin{figure}[!t]
\centering
\includegraphics[width=3.5in]{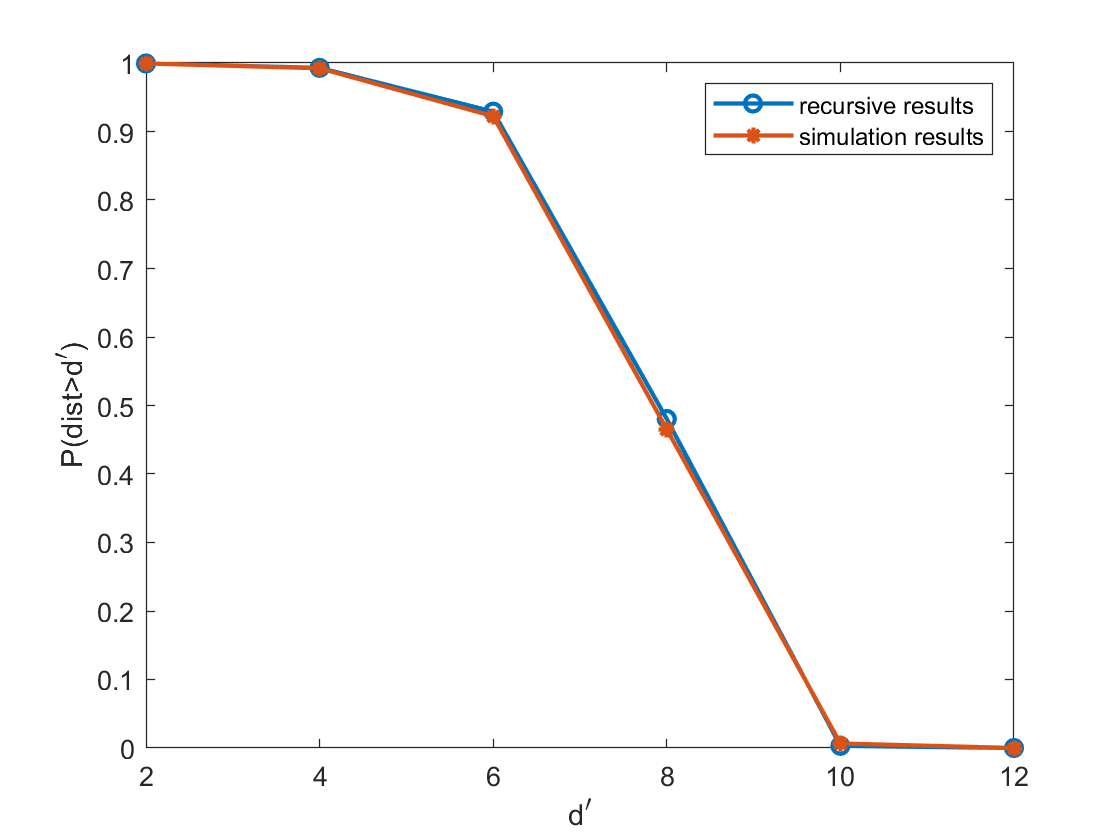}
\caption{The tail distribution $\mathbb{P}(dist(v_i,v_j) > d^\prime)$ were obtained from both recursive equations and numerical simulations. The calculations were performed on an irregular LDPC code ensemble with a block length of $N = 20000$. The degree distributions for the irregular LDPC code ensemble are specified as $\lambda(x) = 0.38354x + 0.04237x^2 + 0.57409x^3$ and $\rho(x) = 0.24123x^4 + 0.75877x^5$. The numerical results were averaged over 50 graph instances.}
\label{fig4}
\end{figure}

The expected values of $I_{v_i}(2t)$, $I_{c_r}^{(v_i)}(2t-1)$, and $I_{v_{i^\prime}}^{(c_r)}(2t-2)$ are denoted by $P_{2t}$, $\widetilde{P}_{2t-1}$, and $\widetilde{P}_{2t-2}$, respectively. That is,

\begin{equation}
\label{eq12}
\begin{split}
 &P_{2t} = \mathbb{P}(dist(v_i,v_j)>2t|dist(v_i,v_j)>2t-1)= \int_0^1 I\mathbb{P}\Big(I_{v_i}(2t) = I\Big) \mathrm{d}I,   
\end{split}
\end{equation}

\begin{equation}
\label{eq13}
\begin{split}
 &\widetilde{P}_{2t-1}= \int_0^1 I \widetilde{\mathbb{P}}\Big(I_{c_r}^{(v_i)}(2t-1)=I\Big)\mathrm{d}I   
\end{split}
\end{equation}
and
\begin{equation}
\label{eq14}
\begin{split}
&\widetilde{P}_{2t-2}=\int_0^1 I \widetilde{\mathbb{P}}\Big(I_{v_{i^\prime}}^{(c_r)}(2t-2)=I\Big)\mathrm{d}I.    
\end{split}
\end{equation}

Plugging Eqs. (\ref{eq7}), (\ref{eq10}) and (\ref{eq11}) into Eqs. (\ref{eq12}), (\ref{eq13}) and (\ref{eq14}), respectively, we obtain the recursion
equations
\begin{equation}
\label{irr_eqp2t}
P_{2t}=\sum_{d=1}^\infty L_d \bigg(\widetilde{P}_{2t-1}\bigg)^d, 
\end{equation}

\begin{equation}
\widetilde{P}_{2t-1} = \widetilde{P}_{2t-2}, 
\end{equation}
and 
\begin{equation}
\widetilde{P}_{2t-2} = \sum_{d=1}^\infty \lambda_d \bigg(\widetilde{P}_{2t-3}\bigg)^{d-1}.
\end{equation}
When the number of nodes is sufficiently large, we can approximately obtain
\begin{equation}
\label{irr_1}
\widetilde{P}_1 = 1-\frac{1}{N}.
\end{equation} This completes the proof of Lemma \ref{lemma_irregular_w} for the case $l \le \theta_1 \log_{(J_{\max}-1)^5(K_{\max}-1)^3} N^2$ with $\theta_1 <1$.

When $l > \theta_1\log_{(J_{\max}-1)^5(K_{\max}-1)^3} N^2$, one can obtain an upper bound on $\overline{w}_{2l}^1(N,L,R)$ by using the expected number of distinct variable nodes in the computation graph generated from the ensemble $\mathrm{LDPC}(N, L, R)$. The expected number of distinct variable nodes can be computed analogously and is given by
\begin{equation}
    N\left(1 - \prod_{t=1}^{l} P_{2t}\right),
\end{equation}
where, 
\begin{equation}
\label{irrl_eqp2t}
P_{2t}=\sum_{d=1}^\infty L_d \bigg(\widetilde{P}_{2t-1}\bigg)^d, 
\end{equation}
\begin{equation}
\widetilde{P}_{2t-1} = \sum_{d=1}^\infty \rho_d \left(\widetilde{P}_{2t-2}\right)^{d-1}, 
\end{equation}
\begin{equation}
\widetilde{P}_{2t-2} = \sum_{d=1}^\infty \lambda_d \bigg(\widetilde{P}_{2t-3}\bigg)^{d-1}.
\end{equation}
and
\begin{equation}
\label{irrl_1}
\widetilde{P}_1 = \sum_{d=1}^\infty \rho_d \left(1-\frac{1}{N}\right)^{d-1}.
\end{equation}

In Fig. \ref{fig4}, we depict the tail distribution $\mathbb{P} (dist(v_i,v_j) > d^\prime)=\prod_{t=1}^{d^\prime} \mathbb{P}(dist(v_i,v_j)>t|dist(v_i,v_j)>t-1)$ for an ensemble of irregular LDPC codes with block length $N=20000$. This result is obtained using (\ref{irrl_eqp2t})$\sim$(\ref{irrl_1}). The degree distributions for the irregular LDPC code ensemble are specified as $\lambda(x) = 0.38354x + 0.04237x^2 + 0.57409x^3$ and $\rho(x) = 0.24123x^4 + 0.75877x^5$. The results are compared with computer simulations showing excellent agreement.
\end{appendices}

\bibliographystyle{ieeetr}
\bibliography{reference}

\begin{thebibliography}{10}

\bibitem{mackay1997near}
D.~J. MacKay and R.~M. Neal, ``Near shannon limit performance of low density parity check codes,'' {\em Electronics {L}etters}, vol.~33, no.~6, pp.~457--458, 1997.

\bibitem{itu2023framework}
D.~N.~R. ITU-R, ``Framework and overall objectives of the future development of {IMT} for 2030 and beyond,'' {\em Framework and Overall Objectives of the Future Development of IMT for}, vol.~2030, 2023.

\bibitem{10812743}
H.~Pennanen, T.~Hänninen, O.~Tervo, A.~Tölli, and M.~Latva-Aho, ``{6G: The Intelligent Network of Everything},'' {\em IEEE Access}, vol.~13, pp.~1319--1421, 2025.

\bibitem{9200376}
A.~H. Sodhro, S.~Pirbhulal, Z.~Luo, K.~Muhammad, and N.~Z. Zahid, ``{Toward 6G Architecture for Energy-Efficient Communication in IoT-Enabled Smart Automation Systems},'' {\em IEEE Internet of Things Journal}, vol.~8, no.~7, pp.~5141--5148, 2021.

\bibitem{richardson2002capacity}
T.~J. Richardson and R.~L. Urbanke, ``The capacity of low-density parity-check codes under message-passing decoding,'' {\em IEEE Transactions on information theory}, vol.~47, no.~2, pp.~599--618, 2002.

\bibitem{910580}
S.-Y. Chung, T.~Richardson, and R.~Urbanke, ``Analysis of sum-product decoding of low-density parity-check codes using a gaussian approximation,'' {\em IEEE Transactions on Information Theory}, vol.~47, no.~2, pp.~657--670, 2001.

\bibitem{957394}
S.~ten Brink, ``Convergence behavior of iteratively decoded parallel concatenated codes,'' {\em IEEE Transactions on Communications}, vol.~49, no.~10, pp.~1727--1737, 2001.

\bibitem{lentmaier2005analysis}
M.~Lentmaier, D.~V. Truhachev, K.~S. Zigangirov, and D.~J. Costello, ``An analysis of the block error probability performance of iterative decoding,'' {\em IEEE Transactions on Information Theory}, vol.~51, no.~11, pp.~3834--3855, 2005.

\bibitem{sahai2010general}
A.~Sahai and P.~Grover, ``A general lower bound on the decoding complexity of message-passing decoding,'' {\em In preparation}, 2010.

\bibitem{grover2011towards}
P.~Grover, K.~Woyach, and A.~Sahai, ``Towards a communication-theoretic understanding of system-level power consumption,'' {\em IEEE Journal on Selected Areas in Communications}, vol.~29, no.~8, pp.~1744--1755, 2011.

\bibitem{dehghan2018tanner}
A.~Dehghan and A.~H. Banihashemi, ``On the tanner graph cycle distribution of random {LDPC}, random protograph-based {LDPC}, and random quasi-cyclic {LDPC} code ensembles,'' {\em IEEE Transactions on Information Theory}, vol.~64, no.~6, pp.~4438--4451, 2018.

\bibitem{richardson2008modern}
T.~Richardson and R.~Urbanke, {\em Modern coding theory}.
\newblock Cambridge university press, 2008.

\bibitem{cui2020does}
J.~Cui, S.~Yan, J.~Hu, N.~Li, R.~Chen, and J.~Li, ``How does repetition coding enable reliable and covert communications?,'' {\em IEEE Wireless Communications Letters}, vol.~10, no.~3, pp.~639--643, 2020.

\bibitem{CCSDS131B3}
{Consultative Committee for Space Data Systems (CCSDS)}, ``{TM Synchronization and Channel Coding (Blue Book)},'' 2017.
\newblock [Online]. Available: https://ccsds.org/Pubs/131x0b5.pdf.

\bibitem{11090080}
``{IEEE Standard for Information technology--Telecommunications and information exchange between systems Local and metropolitan area networks--Specific requirements - Part 11: Wireless LAN Medium Access Control (MAC) and Physical Layer (PHY) Specifications Amendment 2: Enhancements for Extremely High Throughput (EHT)},'' {\em IEEE Std 802.11be-2024 (Amendment to IEEE Std 802.11-2024, as amended by 802.11bh-2024)}, pp.~1--1020, 2025.

\bibitem{cote2012chernoff}
F.~D. C{\^o}t{\'e}, I.~N. Psaromiligkos, and W.~J. Gross, ``{A Chernoff-type lower bound for the Gaussian Q-function},'' {\em arXiv preprint arXiv:1202.6483}, 2012.

\bibitem{tarver2021gpu}
C.~Tarver, M.~Tonnemacher, H.~Chen, J.~Zhang, and J.~R. Cavallaro, ``{GPU-based, LDPC decoding for 5G and beyond},'' {\em IEEE Open Journal of Circuits and Systems}, vol.~2, pp.~278--290, 2021.

\bibitem{nitzan2016distance}
M.~Nitzan, E.~Katzav, R.~K{\"u}hn, and O.~Biham, ``Distance distribution in configuration-model networks,'' {\em Physical Review E}, vol.~93, no.~6, p.~062309, 2016.

\bibitem{hu2005regular}
X.-Y. Hu, E.~Eleftheriou, and D.-M. Arnold, ``Regular and irregular progressive edge-growth tanner graphs,'' {\em IEEE transactions on information theory}, vol.~51, no.~1, pp.~386--398, 2005.

\bibitem{jang2022design}
M.~Jang, K.~Kim, S.~Myung, H.~Jeong, K.-J. Kim, and S.-H. Kim, ``A design of layered decoding for {QC-LDPC} codes based on reciprocal channel approximation,'' in {\em 2022 IEEE International Symposium on Information Theory (ISIT)}, pp.~554--559, IEEE, 2022.

\bibitem{chang2023optimization}
D.~Chang, G.~Wang, G.~Yan, and D.~Yin, ``An optimization model for offline scheduling policy of low-density parity-check codes,'' {\em IEEE Communications Letters}, 2024.

\end{thebibliography}

\end{document}